\newcommand{\red}[1]{\textcolor{red}{#1}}
\newcommand{\violet}[1]{\textcolor{violet}{#1}}
\newtheorem{conjecture}{Conjecture}
\newtheorem{theorem}{Theorem}
\newtheorem{corollary}{Corollary}
\newtheorem{lemma}{Lemma}
\theoremstyle{definition}
\newtheorem{definition}{Definition}
\newtheorem{remark}{Remark}
\newtheorem{note}{Note}
\newtheorem{observation}{Observation}
\newcommand{\poly}{\operatorname{poly}}
\def\protocol#1{{\normalfont\textsc{#1}}\xspace}
\newcommand{\ssrk}{\calP_{\protocol{SSRK}}}
\newcommand{\ccd}{\protocol{Detect}}
\newcommand{\lsle}{\calP_{\protocol{LSLE}}}
\newcommand{\findtarget}{\protocol{FindTarget}}
\newcommand{\ranking}{\protocol{Rank}}
\newcommand{\phaseclock}{\protocol{PhaseClock}}
\newcommand{\coldbfull}{\protocol{Collision\allowbreak{}Detection\allowbreak{}WithBounds}}
\newcommand{\coldb}{\protocol{CDWB}}
\newcommand{\assignranks}{\protocol{AssignRanks}}
\newcommand{\calP}{\mathcal{P}}
\newcommand{\tif}{\textit{if }}
\newcommand{\telseif}{\textit{else if }}
\newcommand{\totherwise}{\textit{otherwise}}
\newcommand{\sinit}{s_0}
\newcommand{\ini}{a_{\mathrm{in}}}
\newcommand{\res}{a_{\mathrm{re}}}
\newcommand{\var}{\mathtt{var}}
\newcommand{\rst}{\mathtt{reset}}
\newcommand{\mode}{\mathtt{mode}}
\newcommand{\target}{\mathtt{target}}
\newcommand{\clock}{\mathtt{clock}}
\newcommand{\parity}{\mathtt{parity}}
\newcommand{\delay}{\mathtt{delay}}
\newcommand{\ind}{\mathtt{index}}
\newcommand{\rank}{\mathtt{rank}}
\newcommand{\leader}{\mathtt{leader}}
\newcommand{\nonce}{\mathtt{nonce}}
\newcommand{\cand}{\mathtt{cand}}
\newcommand{\detected}{\mathtt{det}}
\newcommand{\susp}{\mathtt{susp}}
\newcommand{\vlist}{\mathtt{list}}
\newcommand{\grid}{\mathtt{gID}}
\newcommand{\infectivity}{\mathtt{infectivity}}
\newcommand{\calC}{\mathcal{C}}
\newcommand{\cphase}{\calC_{\mathrm{phase}}}
\newcommand{\csafe}{\calC_{\mathrm{safe}}}
\newcommand{\calI}{\mathcal{I}}
\newcommand{\cinit}{\calC_{\mathrm{init}}}
\newcommand{\goalI}{\sfrac{n^2}{\rho} \cdot \log \rho}
\newcommand{\goalT}{\sfrac{n}{\rho} \cdot \log \rho}
\newcommand{\rtimeI}{n^{3/2} \log n}
\newcommand{\rtimeT}{\sqrt{n}\log n}
\newcommand{\constR}{c_R}
\newcommand{\constL}{c_L}
\newcommand{\constD}{c_D}
\newcommand{\constM}{c_M}
\newcommand{\constT}{c_T}
\newcommand{\cmin}{c_{\mathrm{min}}}
\newcommand{\modeF}{\mathsf{F}}
\newcommand{\modeD}{\mathsf{D}}
\newcommand{\modeR}{\mathsf{R}}
\newcommand{\ad}{A_{\modeD}}
\newcommand{\phase}{\mathit{Phase}}
\newcommand{\rnew}{r_{\mathrm{new}}}
\newcommand{\pini}{p_{1}}
\newcommand{\pres}{p_{2}}
\newcommand{\rini}{r_{1}}
\newcommand{\rres}{r_{2}}
\newcommand{\dres}{d_{2}}
\newcommand{\ncand}{n_{\mathrm{cand}}}
\newcommand{\mname}{\rho^2}
\newcommand{\rem}{m}
\newcommand{\rept}{\constT}
\newcommand{\maxnum}{\lfloor \rho \lg \rho \rfloor}
\newcommand{\tmax}{T_{4}}
\newcommand{\tmid}{T_{3.5}}
\newcommand{\rname}{R_{\mathrm{name}}}
\newcommand{\logn}{\lceil \lg n \rceil}
\title{
Complementary Time--Space Tradeoff for Self-Stabilizing Leader Election: Polynomial States Meet Sublinear Time
}
\date{}
\author{Yuichi Sudo}
\affil{Hosei University, Tokyo, Japan}
\begin{document}

\maketitle

\begin{abstract}
We study the self-stabilizing leader election (SS-LE) problem in the population protocol model, assuming exact knowledge of the population size $n$. Burman, Chen, Chen, Doty, Nowak, Severson, and Xu [BCC+21a] (PODC) showed that this problem can be solved in $O(n)$ expected time with $O(n)$ states. Recently, Gąsieniec, Grodzicki, and Stachowiak [GGS25] (PODC) proved that $n+O(\log n)$ states suffice to achieve $O(n \log n)$ time both in expectation and with high probability (w.h.p.). 
If substantially more states are available, sublinear time can be achieved.
The authors of [BCC+21] presented a $2^{O(n^\rho\log n)}$-state SS-LE protocol with a parameter $\rho$:
setting $\rho = \Theta(\log n)$ yields an optimal $O(\log n)$ time both in expectation and w.h.p., while $\rho = \Theta(1)$ results in $O(\rho\,n^{1/(\rho+1)})$ expected time. Recently, Austin, Berenbrink, Friedetzky, Götte, and Hintze [ABF+25] (PODC) presented a novel SS-LE protocol parameterized by a positive integer $\rho$ with $1 \le \rho < n/2$ that solves SS-LE in $O(\sfrac{n}{\rho}\cdot\log n)$ time w.h.p.\ using $2^{O(\rho^2\log n)}$ states. This paper independently presents yet another time--space tradeoff of SS-LE: for any positive integer $\rho$ with $2 \le \rho \le \sqrt{n}$, SS-LE can be achieved within $O\left(\sfrac{n}{\rho}\cdot \log\rho\right)$ expected time using $2^{2\rho\lg^2\rho + O(\log n)}$ states.
The proposed protocol uses significantly fewer states than [ABF+25] for any expected stabilization time above $\Theta(\sqrt{n}\log n)$.
When $\rho = \Theta\left(\sfrac{\log n}{\log^2 \log n}\right)$,
the proposed protocol is the first to achieve sublinear time while using only polynomially many states.
A limitation of our protocol is that the constraint $\rho\le\sqrt{n}$ prevents achieving $o(\sqrt{n}\log n)$ time, whereas the protocol of [ABF+25] can surpass this bound.  
\end{abstract}

\section{Introduction}
\label{sec:intro}
This paper investigates the self-stabilizing leader election (SS-LE) problem in the population protocol model introduced by Angluin, Aspnes, Diamadi, Fischer, and Peralta~\cite{AAD+06}, which has been extensively studied in the distributed computing community for over two decades. Recently, the time--space tradeoff for SS-LE, assuming exact knowledge of the population size~$n$, has drawn significant attention. We present a novel time--space tradeoff that substantially improves the best-known upper bound on space complexity required to achieve any target time complexity within the range $\Theta(\sqrt{n}\log n)$ to $o(n)$.

\subsection{Population Protocols}
In population protocols, we consider $n\ge2$ anonymous state machines called \emph{agents}, which together form a single \emph{population}. At each time step, an ordered pair of agents is selected uniformly at random to have a (pairwise) \emph{interaction}. The chosen agents assume the roles of \emph{initiator} and \emph{responder} and, following the transition function of a given protocol, update their states based on their current states and roles. Each agent outputs a symbol determined by its current state. Throughout this paper, $A$ denotes the set of all agents.

A \emph{protocol} is a five-tuple $\calP = (Q, \sinit, Y, \delta, \pi)$, where $Q$ is the set of agent states, $\sinit$ is the initial state (unused in self-stabilizing protocols that we will see later), $Y$ is the set of output symbols, $\delta\colon Q \times Q \to Q \times Q$ is the transition function, and $\pi\colon Q \to Y$ is the output function.\footnote{
We omit the input symbol set and the input function here, as the leader election involves no external inputs.
}
A global state of the population (or \emph{configuration}) is a function $C\colon A \to Q$ that represents the state of each agent. A configuration $C$ is \emph{output-stable} (or simply \emph{stable}) if no agent ever changes its output in any subsequent execution from $C$.

The complexity of a protocol is measured by its stabilization time and its number of states. 
The \emph{stabilization time} of a protocol $\calP$ is the number of interactions, divided by the population size $n = |A|$, required for the population under $\calP$ to reach a stable configuration.\footnote{This definition applies to static problems such as leader election and majority; a more intricate definition is required for non-static problems. See \cite{AAF+08}.} 
The quantity ``number of interactions divided by $n$'' is commonly called \emph{parallel time}, reflecting the fact that agents may interact concurrently in practice. 
Throughout this paper, ``time'' refers to parallel time. 
Since interacting agents are chosen uniformly at random, we evaluate stabilization time either \emph{in expectation} or \emph{with high probability}. 
We say that an event occurs \emph{with high probability (w.h.p.)} if, for any positive constant $\eta = \Theta(1)$, it occurs with probability $1 - O(n^{-\eta})$ (by appropriately adjusting the constant parameters of a protocol, if necessary). 
The \emph{number of states} of a protocol is $|Q|$, the cardinality of its state set $Q$. 
Because some protocols use super-exponentially many states, we also express space complexity in \emph{bits}: $x$ bits correspond to $2^x$ states.

\subsection{Related Work}
Time–space tradeoffs arise for many problems in population protocols: the more states available, the shorter the stabilization time. 
The tradeoff for (non-self-stabilizing) leader election was intensively studied between 2015 and 2020. 
In this problem, exactly one agent must output `L' (leader) and all others `F' (follower). 
Doty and Soloveichik~\cite{DS18} proved that any $O(1)$-state protocol requires $\Omega(n)$ expected time, establishing the time-optimality of the two-state protocol of \cite{AAD+06}, which stabilizes in $O(n)$ expected time. 
Alistarh and Gelashvili~\cite{AG18} gave a protocol with $O(\log^3 n)$ states and $O(\log^3 n)$ expected time, and subsequent work~\cite{AAE+17, BCER17, AAG18, BKKO18, GS20, GSU18, SOI+20} progressively improved these bounds. 
Finally, Berenbrink, Giakkoupis, and Kling~\cite{BGK20} obtained an $O(\log\log n)$-state and $O(\log n)$-expected-time protocol, which is optimal in both time and space. 
Optimality follows from two lower bounds:
$o(n^2/\poly(\log n))$ expected time requires at least $\tfrac{1}{2}\log\log n$ states~\cite{AAE+17}, and $\Omega(\log n)$ expected time is necessary~\cite{SM20}.\footnote{The latter bound goes beyond the simple coupon-collector argument,
since all agents may initially be in the follower state.}

This paper studies self-stabilizing leader election (SS-LE). Self-stabilization, introduced by Dijkstra~\cite{Dij74}, is a fault-tolerance concept that requires a protocol to converge to a correct configuration starting from an \emph{arbitrary} configuration. A population protocol $\calP$ solves SS-LE if, starting from any configuration, it eventually stabilizes with exactly one leader.
Solving SS-LE is more challenging than (non-self-stabilizing) leader election. An SS-LE protocol must create a leader when none exists and reduce multiple leaders to one; these opposing goals often conflict. Indeed, using the partition technique of \cite{AAF+08,CIW12}, one can easily show that SS-LE is impossible unless each agent knows the exact population size $n$ (see \cite{SOK+20}, p.~618).
Nevertheless, SS-LE has been extensively studied under additional assumptions or relaxed stabilization requirements.

One approach to SS-LE is to assume that each agent knows the exact population size $n$, i.e., $n$ is hard-coded into the protocol. 
Tables~\ref{table:silent} and~\ref{table:non-silent} summarize the upper and lower bounds in this setting, including our contributions, for \emph{silent} and \emph{non-silent} protocols. 
Silent protocols require the population to reach a configuration from which no agent ever changes its \emph{state}, a stronger notion than output stabilization. 
Note that expected and w.h.p.~stabilization times differ by at most an $O(\log n)$ factor.
By standard arguments for self-stabilizing protocols, the following holds:

\begin{remark}
For any function $f:\mathbb{N}^+\to\mathbb{N}^+$, 
if a protocol solves SS-LE in $O(f(n))$ expected time, 
then it solves SS-LE in $O(f(n)\log n)$ time w.h.p.; 
conversely, $O(f(n))$ time w.h.p.\ implies $O(f(n))$ expected time.
\end{remark}

All existing SS-LE protocols with exact knowledge of the population size $n$ (Tables~\ref{table:silent} and~\ref{table:non-silent}) also solve the more general problem of \emph{self-stabilizing ranking} (SS-RK).
\begin{definition}[Self-Stabilizing Ranking (SS-RK)]
A protocol~$\calP$ solves SS-RK if, starting from any configuration, the population under~$\calP$ reaches, with probability~1, an output-stable configuration in which all agents output distinct integers in $[1,n]$.
\end{definition}
\noindent
We call such outputs \emph{ranks}. 
Once unique ranks are assigned, the agent with rank $1$ can serve as the leader; hence SS-LE reduces to SS-RK, while the converse reduction remains open. 
Since all known SS-LE protocols with knowledge of $n$ also solve SS-RK, this motivates the following conjecture.

\begin{conjecture}
\label{conjecture:ss-rk}
SS-LE and SS-RK have the same time--space tradeoffs: for any functions $f,g:\mathbb{N}\to\mathbb{N}$,
SS-LE can be solved in expected (resp.\ w.h.p.) stabilization time $O(f(n))$ using $O(g(n))$ states
if and only if SS-RK can be solved in expected (resp.\ w.h.p.) stabilization time $O(f(n))$ using $O(g(n))$ states.
\end{conjecture}

Assuming knowledge of $n$, Cai, Izumi, and Wada~\cite{CIW12} proved that any SS-LE protocol requires at least $n$ states and presented a silent, state-optimal ($n$-state) protocol with $O(n^2)$ expected time. 
Burman, Chen, Chen, Doty, Nowak, Severson, and Xu~\cite{BCC+21} showed that increasing the number of states by a constant factor reduces the expected stabilization time to $O(n)$, and this is time-optimal for silent SS-LE: any silent protocol requires $\Omega(n)$ expected time (and $\Omega(n\log n)$ w.h.p.). 
Dropping the silent requirement enables sublinear time (i.e., $o(n)$ time). 
They presented an $O(\log n)$-time protocol w.h.p.\ using super-exponentially many states (i.e., $O(n^{\log n}\log n)$ bits). No faster protocol exists, 
which just follows from the coupon collector argument (in contrast to the non-self-stabilizing case~\cite{SM20}).
They further gave a protocol parameterized by $\rho=\Theta(1)$ achieving $O(\rho\,n^{1/(\rho+1)})$ expected time using $O(n^\rho\log n)$ bits.

\begin{table}[t]
\caption{
Silent SS-LE protocols with exact knowledge of the population size $n$ (\red{space shown in number of \emph{states}}); all protocols also solve SS-RK.
}
\label{table:silent}

\vspace{0.1cm}

\centering
\begin{tabular}{c c c c}
\hline
 protocol & expected time & w.h.p.~time & \red{\emph{states}}
 \\ 
\hline
\cite{CIW12} & $O(n^2)$ & $O(n^2)$ & $n$ \\
\cite{BCC+21} & $O(n)$ & $O(n \log n)$ & $O(n)$ \\
\cite{BEG+25} & $O(n \log n)$ & $O(n \log n)$ & $n+O(\log^2 n)$  \\
\cite{GGS25} & $O(n \log n)$ & $O(n \log n)$ & $n+O(\log n)$  \\
\cite{GGS25} & $O(n^{7/4} \log^2 n)$ & $O(n^{7/4} \log^2 n)$ & $n+1$  \\
\hline 
\cite{CIW12} & any & any & $\ge n$ \\
\cite{BCC+21} & $\Omega(n)$& $\Omega(n \log n)$  & any \\
\hline
\end{tabular} 

\vspace{0.5cm}

\caption{ 
 Non-silent SS-LE protocols with exact knowledge of the population size $n$
 (\red{space shown in \emph{bits}; $x$ bits correspond to $2^x$ states});
 all protocols also solve SS-RK.
}
\label{table:non-silent}

\vspace{0.1cm}

\centering
\begin{tabular}{c c c c c }
\hline
 protocol & expected time & w.h.p.~time & \red{\emph{bits}} & parameters \\ 
\hline 
\cite{BCC+21} & $O(\log n)$ & $O(\log n)$ & $O(n^{\log n}\log n)$ & - \\

\cite{BCC+21} & $O(\rho\cdot n^{1/(\rho+1)})$ & $O(\rho\cdot n^{1/(\rho+1) }\log n)$ & $O(n^{\rho}\log n)$ & 
$\rho\ge 1, \rho=\Theta(1)$\\

\cite{ABF+25} & $O(\frac{n}{\rho} \log n)$ & $O(\frac{n}{\rho} \log n)$  & $O(\rho^2 \log n)$ & $1 \le \rho < n/2$\\

Theorem \ref{theorem:main} & $O(\frac{n}{\rho}\log \rho)$ & $O(\frac{n}{\rho}\log \rho \cdot \log n)$ & $2\rho \lg^2 \rho + O(\log n)$ & $2 \le \rho \le \sqrt{n}$ \\

($\rho = \lfloor \sqrt{n} \rfloor$) & $O(\sqrt{n}\log n)$ & $O(\sqrt{n}\log^2 n)$& $\sqrt{n} \lg^2 n + O(\log n)$& - \\
($\rho = \frac{\log n}{\log^2 \log n}$)& $O(\frac{n\log^3 \log n}{\log n})=o(n)$ & $O(n\log^3 \log n)$
&
$O(\log n)$
& - \\

\hline 
trivial & $\Omega(\log n)$& $\Omega(\log n)$  & any &-\\
\hline
\end{tabular} 
\end{table}

After the seminal work of~\cite{BCC+21}, the SS-LE problem with exact knowledge of the population size $n$ has recently attracted renewed attention. In particular, four papers appeared in 2025: Berenbrink, Elsässer, Götte, Hintze, and Kaaser~\cite{BEG+25}; Gąsieniec, Grodzicki, and Stachowiak~\cite{GGS25}; Araya and Sudo~\cite{AS26}\footnote{The conference version~\cite{AS25} appeared in 2025.}; and Austin, Berenbrink, Friedetzky, Götte, and Hintze~\cite{ABF+25}.

The first two studies, \cite{BEG+25} and \cite{GGS25}, advance silent SS-LE. 
Since any SS-LE protocol requires at least $n$ states, a central question is how many additional states reduce the $O(n^2)$ stabilization time of the space-optimal protocol of \cite{CIW12}. The authors of \cite{BEG+25} showed that $n+O(\log^2 n)$ states achieve optimal $O(n\log n)$ time w.h.p., and \cite{GGS25} reduced the overhead to $O(\log n)$ states while preserving optimal w.h.p.\ time. 
They further proved that even one additional state yields $O(n^{7/4}\log^2 n)$ time w.h.p., still maintaining silence.

For non-silent SS-LE, \cite{BCC+21} showed that dropping silence enables sublinear stabilization time, but all their such protocols require super-exponentially many states (i.e., $\omega(n)$ bits). If Conjecture~\ref{conjecture:ss-rk} holds, then fast SS-RK (and hence SS-LE) necessarily involves detecting rank collisions (i.e., two or more agents sharing the same rank). In their words, ``the core difficulty'' of fast SS-RK lies in collision detection, and their non-silent protocols use super-exponentially many states for this task.  Motivated by this, \cite{BCC+21} posed the following open problems:
\begin{enumerate}
  \item Does there exist an SS-LE protocol with sublinear stabilization time using only $o(n)$ bits (i.e., sub-exponentially many states)?
  \item Does there exist a protocol that detects collisions in sublinear time using only $o(n)$ bits, even in the non-self-stabilizing setting (where agents start with adversarially assigned ranks but otherwise share the designated initial state)?
\end{enumerate}

The second question was answered affirmatively in~\cite{AS26}, which showed that $O(n\poly(\log n))$ states suffice to detect collisions in polynomial time. 
In particular, in the non-self-stabilizing setting, collisions can be detected in $O(\sqrt{n\log n})$ expected time.
However, this protocol does not directly resolve the core difficulty of fast SS-RK. 
It runs a fast primary protocol and a slow backup protocol in parallel: the primary stabilizes in $O(\sqrt{n}\,\log^{3/2} n)$ time w.h.p.\ but eventually terminates to avoid false positives, whereas the backup never terminates and requires $\Theta(n)$ expected time. 
Although together they detect collisions in $O(\sqrt{n\log n})$ expected time, applying this approach to SS-RK requires repeatedly restarting the primary protocol. 
Without this reset, execution may start in a configuration where the primary protocol has already terminated; in that case, only the backup protocol can detect collisions, incurring an inevitable $\Theta(n)$ delay.
However, repeating the cycle of the primary protocol is also problematic: since interactions are chosen uniformly at random (u.a.r.) at each time step, there remains a nonzero (albeit arbitrarily small) probability that a global reset fails and some agents retain outdated information, leading to false positives and repeated rank reassignment, thereby violating stability.

Recently, \cite{ABF+25} presented a dramatically improved time--space tradeoff for SS-LE, affirmatively answering the first open question. They proposed a protocol parameterized by a positive integer $\rho$ ($1 \le \rho < n/2$) that solves SS-RK (and thus SS-LE) in $O(\sfrac{n}{\rho}\cdot\log n)$ time w.h.p.\ using $O(\rho^2\log n)$ bits.
Choosing $\rho = \omega(\log n)\cap o(\sqrt{n/\log n})$ yields a sublinear-time, sub-exponential-state protocol:
for example, $\rho=\Theta(\sqrt{n}/\log n)$ gives $O(\sqrt{n}\log^2 n)$ time w.h.p.\ with $2^{O(n/\log n)}=2^{o(n)}$ states, while $\rho=\Theta(\log^2 n)$ gives $O(n/\log n)$ time w.h.p.\ with $2^{O(\log^5 n)}$ states.
Allowing super-exponentially many states---but still far fewer than in \cite{BCC+21}---yields a time-optimal solution: setting $\rho=\Theta(n)$ achieves $\Theta(\log n)$ time w.h.p.\ using $2^{O(n^2\log n)}$ states.
Their collision-detection module also resolves the second question by \cite{BCC+21} independently of~\cite{AS26}. Although this module uses significantly more states ($2^{O(\rho^2\log n)}$ vs.\ $O(n\cdot\mathrm{poly}(\log n))$), it never terminates and thus does not require repeated invocation, enabling the first sublinear-time, sub-exponential-state SS-RK protocol.

While the majority of work on population protocols studies the setting where every pair of agents can interact, quite a few studies deal with a more general setting: interactions may occur between restricted pairs of agents, introducing an \emph{interaction graph} $G=(V,E)$, where $V$ represents the set of agents and $E$ the set of interactable pairs.
SS-LE and its weaker variant have also been studied in rings~\cite{FJ06,CC19,YSM21,YSO+23}, regular graphs~\cite{CC20}, and general graphs~\cite{FJ06,SOK+18,SOK+20det,SSN+21,KESI24}.

\subsection{Our Contribution}
\label{sec:contribution}
This paper affirmatively resolves the first open problem of \cite{BCC+21}, independently of and nearly concurrently with \cite{ABF+25}.\footnote{
The first version of this paper was posted on arXiv in May 2025, the same month that~\cite{ABF+25} appeared on arXiv.
} Specifically, we prove the following theorem. (We write $\lg x$ for $\log_2 x$ throughout this paper.)

\begin{theorem}[Main Theorem]
\label{theorem:main}
For any positive integer $\rho$ with $2 \le \rho \le \sqrt{n}$, there exists a protocol that solves SS-RK (and hence SS-LE) in $O(\goalT)$ expected time using $2^{2\rho\lg^2 \rho + O(\log n)}$ states
(equivalently, $2\rho\lg^2 \rho + O(\log n)$ bits).
\end{theorem}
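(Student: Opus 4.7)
The plan is to build $\ssrk$ as the parallel composition of three modules driven by a common phase clock: (i) a $\phaseclock$ providing epochs of length $\Theta(\goalT)$; (ii) a ranking installer $\assignranks$ that, at the start of each epoch, proposes a fresh random rank assignment together with a shared epoch nonce; and (iii) a collision-detection subroutine $\coldb$ that, during the epoch, attempts to certify that all $n$ ranks are distinct. An epoch ending without detected collisions freezes the current ranks, whereas a positive detection triggers re-sampling in the next epoch. The loose leader-election layer $\lsle$ with its Ronin/Vassal/King hierarchy serves as the coordinator that initiates re-sampling and owns the epoch nonce.

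The technical core is the design of $\coldb$ within the $2\rho\lg\rho+O(\log n)$-bit budget. Each agent would carry a fingerprint of $\rho$ independent coordinates over an alphabet of size $\rho^2$, deterministically computed from its rank and the shared epoch nonce; thus two agents with equal ranks necessarily produce identical fingerprints, while two agents with distinct ranks produce essentially independent fingerprints. Within an epoch of length $\Theta(\goalT)$, repeated random pairwise meetings compare $\Theta(\log \rho)$ random coordinates per pair, and a standard Chernoff plus union-bound argument then shows that every colliding pair is identified with probability $1-1/\poly(n)$. The resulting collision flag spreads to the whole population by epidemic in an additional $O((n/\rho)\log n)$ interactions, which still fits within one epoch because $\rho\le\sqrt{n}$.

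The main obstacle is the self-stabilizing nature of the problem: the execution may start in an adversarial configuration in which ranks, fingerprints, nonce, and the phase-clock register are all corrupted, and we must rule out both spurious alarms and missed collisions throughout the infinite future. I would handle this in two stages. First, analyze $\phaseclock$ and the nonce-refresh mechanism in isolation and show that within $O(\goalT)$ time the clock and nonce stabilize, regardless of the initial state. Second, once the clock and nonce are stable, show that each subsequent epoch either (a) correctly certifies a collision-free ranking or (b) with probability bounded away from zero triggers $\assignranks$ to install a fresh random assignment across all agents. A geometric-tail argument then gives $O(\goalT)$ expected time to a stable ranking, while the $1/\poly(n)$ false-positive rate of $\coldb$ per epoch ensures that, once correctness is achieved, it is preserved through any polynomial-length future execution and hence that the protocol self-stabilizes.

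Finally, the state count follows by summing the bit-widths of the fingerprint ($2\rho\lg\rho$ bits) together with the rank, epoch counter, phase-clock register, nonce, $\lsle$ state, and collision flag (each $O(\log n)$ bits). The constraint $\rho\le\sqrt{n}$ enters in exactly two places in the plan: it guarantees that the fingerprint alphabet of size $\rho^2$ still fits within an $O(\log n)$-bit register per coordinate, so that the additive $O(\log n)$ overhead is not overwhelmed, and it guarantees that the epidemic dissemination of collision evidence completes within a single epoch of the phase clock, which is why the tradeoff cannot be pushed below $\Theta(\rtimeT)$ by this approach.
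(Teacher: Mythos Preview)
Your proposal has a fundamental gap that prevents it from establishing self-stabilization. You explicitly allow a $1/\poly(n)$ false-positive rate per epoch and conclude that ``once correctness is achieved, it is preserved through any polynomial-length future execution and hence the protocol self-stabilizes.'' This inference is incorrect: SS-RK requires the population to reach an \emph{output-stable} configuration, i.e., one from which no agent ever changes its output in any subsequent execution. A per-epoch false-positive probability of $1/\poly(n)$ means that almost surely a false alarm eventually fires, $\assignranks$ re-samples the ranks, and the previously correct output changes---so no output-stable configuration is ever reached, and what you have built is at best a loosely-stabilizing protocol. The paper confronts exactly this obstacle and resolves it not by making false positives rare but by making them \emph{impossible}: its collision detector $\ccd(r,\rho)$ is engineered so that an explicitly defined set $\ssafe$ of safe configurations is closed under all interactions (Lemma~\ref{lemma:closure}). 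Obtaining this closure is the technical heart of the proof; it relies on the king/vassal/ronin roles (which are part of $\ccd$, not of $\lsle$ as you suggest), the $\trust$ predicate, and---crucially---the restriction of detection to a \emph{single} target rank $r$ selected beforehand by $\findtarget$. Your fingerprint scheme has no analogue of this zero-false-positive guarantee, and it is not clear how to add one without abandoning the approach.

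There is a second, independent gap in the liveness analysis. In $\Theta(\goalI)$ interactions a fixed pair of agents meets only $\Theta((\log\rho)/\rho)$ times in expectation, which is $o(1)$ whenever $\rho=\omega(\log\rho)$; hence ``repeated random pairwise meetings compare $\Theta(\log\rho)$ random coordinates per pair'' cannot succeed---most pairs, including a colliding pair, simply never meet within an epoch. The paper circumvents this by first agreeing on one duplicated rank $r$ and then having each rank-$r$ agent recruit $\rho$ vassals; the collective interaction rate of this group with a second rank-$r$ agent is $\Theta(\rho/n^2)$, and the intra-group epidemic over $\rho+1$ agents costs $O((n^2/\rho)\log\rho)$ interactions, which is what actually produces the $O(\goalI)$ bound. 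Your scheme has no comparable amplification mechanism.
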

Choosing any $\rho=\omega(1)$ yields a sublinear-time, sub-exponential-state protocol. For example, $\rho=\lfloor \sqrt{n}\rfloor$ achieves $O(\sqrt{n}\log n)$ expected time with $2^{\sqrt{n} \lg^2 n}\cdot\poly(n)$ states, while $\rho=\Theta\left(\sfrac{\log n}{\log^2\log n}\right)$ achieves $O\left(n \cdot \sfrac{\log^3\log n}{\log n}\right)=o(n)$ expected time with only $\poly(n)$ states (see Table \ref{table:non-silent}).

The proposed protocol and that of \cite{ABF+25} exhibit complementary strengths and trade-offs.
The proposed protocol uses significantly fewer states than the protocol of \cite{ABF+25} requires to achieve any expected stabilization time above $\Theta(\sqrt{n}\log n)$. However, the restriction $\rho\le\sqrt{n}$ prevents us from attaining stabilization time $o(\sqrt{n}\log n)$. Conversely, the protocol of \cite{ABF+25} can exceed this bound with super-exponentially many states; in particular it can achieve an optimal $O(\log n)$-time solution with $2^{O(n^2\log n)}$ states.

The second example mentioned above (i.e., $\rho = \Theta\left(\sfrac{\log n}{\log^2\log n}\right)$) resolves the first open question of~\cite{BCC+21} in stronger form, upgrading the state bound from sub-exponential to \emph{polynomial}:
\begin{corollary}
\label{col:strong}
There exists a protocol that solves SS-RK (and hence SS-LE) in sublinear expected time using only \emph{polynomially} many states.
\end{corollary}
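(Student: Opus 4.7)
The plan is to instantiate Theorem \ref{theorem:main} at the parameter value $\rho = \lceil \lg n / \lg \lg n \rceil$, which sits exactly at the sweet spot where $\rho \lg \rho = \Theta(\lg n)$, so that both the state bound and the time bound simultaneously land in the regimes demanded by the corollary. For all sufficiently large $n$ this choice satisfies $1 \le \rho \le \sqrt{n}$ and thus lies within the permitted range of Theorem \ref{theorem:main}; the finitely many small values of $n$ can be handled by any brute-force SS-RK protocol whose constants are absorbed into the asymptotic notation.

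With this $\rho$, I would simply read off the two bounds. Since $\lg \rho = \Theta(\lg \lg n)$, the exponent in the state bound is
\[
    2\rho \lg \rho + O(\log n) = \Theta\!\left( \frac{\lg n}{\lg \lg n} \cdot \lg \lg n \right) + O(\log n) = O(\log n),
\]
so the resulting protocol uses $2^{O(\log n)} = \poly(n)$ states. For the expected stabilization time,
\[
    \frac{n}{\rho} \log \rho = \Theta\!\left( \frac{n \lg \lg n}{\lg n} \cdot \lg \lg n \right) = \Theta\!\left( \frac{n (\lg \lg n)^2}{\lg n} \right) = o(n),
\]
which is sublinear, as required.

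There is no genuine obstacle here: Corollary \ref{col:strong} is essentially a pointwise evaluation of Theorem \ref{theorem:main} at the unique asymptotic regime of $\rho$ in which the doubly-exponentially-growing state count $2^{\Theta(\rho \lg \rho)}$ just collapses to a polynomial while the time factor $(n/\rho)\log\rho$ remains $o(n)$. All the conceptual and technical difficulty---constructing the parameterized protocol, analyzing its collision-detection mechanism, and verifying self-stabilization---is encapsulated in Theorem \ref{theorem:main}; the corollary is a one-line substitution.
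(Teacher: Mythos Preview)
Your proposal is correct and matches the paper's approach exactly: the paper derives Corollary~\ref{col:strong} simply by instantiating Theorem~\ref{theorem:main} at $\rho = \Theta(\log n / \log\log n)$, observing (as in Table~\ref{table:non-silent}) that this yields $O(\log n)$ bits and $O\!\left(n\,\sfrac{(\log\log n)^2}{\log n}\right)=o(n)$ expected time. Your write-up is in fact more explicit than the paper, which treats the corollary as an immediate consequence of the preceding discussion without a separate proof.
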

Note that the protocol of~\cite{ABF+25} does not achieve this: to stabilize in sublinear time it requires $\rho=\omega(\log n)$, which entails $\omega(\log^2 n)$ bits (i.e., super-polynomial states). One might observe that \cite{ABF+25} only claim an $O(\sfrac{n}{\rho}\cdot\log n)$ bound for w.h.p.\ stabilization time and hope that a refined analysis or slight modification could reduce their expected time to $O(\sfrac{n}{\rho}\cdot\log \rho)$ or even $O(\sfrac{n}{\rho})$.
However, their SS-RK protocol employs a non-self-stabilizing ranking subprotocol, $\assignranks_{\rho}$ (for $1 \le \rho < n/2$), which inherently requires $\Omega(\sfrac{n}{\rho}\cdot\log n)$ expected time to assign distinct ranks to all $n$ agents while using only $2^{O(\rho \log n)}$ states.

\begin{remark}[Proof in Appendix~\ref{sec:proofs}]
\label{remark:lower_bound_abf}
The ranking protocol $\assignranks_{\rho}$ of~\cite{ABF+25} requires $\Omega(\sfrac{n}{\rho}\cdot\log n)$ expected time to assign distinct ranks to all $n$ agents.
\end{remark}

It is worth mentioning that the proposed SS-LE protocol employs a \emph{loosely-stabilizing leader election} (LS-LE) protocol as a subroutine. LS-LE~\cite{SNY+12, SOK+20, SEIM21} is a weaker variant of SS-LE: from any configuration, the population quickly reaches a \emph{safe} configuration from which exactly one leader persists for an arbitrarily long (but finite) expected time. The time-optimal LS-LE protocol of Sudo, Eguchi, Izumi, and Masuzawa~\cite{SEIM21}, parameterized by $\tau\ge1$, reaches a safe configuration in $O(\tau \log n)$ expected time and thereafter maintains a unique leader for $\Omega(n^{\tau})$ expected time, using only $O(\tau\log n)$ states. Since its time and space complexities are negligible, we can assume a unique leader \emph{almost always} exists when designing our SS-LE protocol. To the best of our knowledge, this is the first work where an LS-LE protocol aids the design of an SS-LE protocol---an intriguing approach in which a stable leader is elected with the help of an unstable leader. 

Most modern population protocols employ the epidemic protocol~\cite{AAE08} to disseminate information from a single agent to the entire population, whereas our protocol uses a slightly slower propagation mechanism to avoid false-positive collision detection. To analyze its stabilization time, we introduce a new variant of the epidemic protocol, called the \emph{labeled epidemic}. While the original epidemic spreads information to all agents within $O(\log n)$ time w.h.p.\ and in expectation, we show that this variant spreads information to a constant fraction of the population (i.e., $\Theta(n)$ agents) in $O(\log^2 n)$ expected time, which may be of independent interest in the study of population protocols.

\subsection{Notation}
Throughout this paper, we write
$[x,y] = \{ i \in \mathbb{Z} : x \le i \le y\}$
for any real numbers $x$ and $y$.
For any set of configurations $\calC$, we say the population \emph{enters} $\calC$ when it reaches some configuration in $\calC$. 

We often use \emph{variables} to define protocols.
In this context, an agent’s state is determined by the values of all protocol variables. For any state $q\in Q$ and variable $\var$, we denote by $q.\var$ the value of $\var$ in state $q$. For any agent $a\in A$, we denote by $a.\var$ the value of $\var$ in $a$’s current state.

\section{Protocol Overview}
\label{sec:overview}
We prove our main theorem (Theorem~\ref{theorem:main}) by presenting the SS-RK protocol $\ssrk(\rho)$. Given any integer $\rho \in [2,\sqrt{n}]$, this protocol solves SS-RK
within $O(\goalI)$ interactions (i.e., $O(\goalT)$ time) in expectation using $2^{2\rho\lg^2 \rho + O(\log n)}$ states.
The pseudocode for SS-RK is presented in Algorithms~\ref{al:ccd}, \ref{al:main}, \ref{al:ranking}, and~\ref{al:findtarget} on pages~\pageref{al:ccd}, \pageref{al:main}, \pageref{al:ranking},
and~\pageref{al:findtarget}, respectively.

The protocol $\ssrk(\rho)$ comprises three sub-protocols: $\findtarget$, $\ccd(r,\rho)$, and $\ranking$. 
These components serve the following roles:\\

\begin{tabular}{|>{\centering\arraybackslash}m{2cm}|m{12cm}|}
\hline
$\findtarget$ & if some rank $r\in[1,n]$ is held by two or more agents, selects one such $r$ as the \emph{target rank} with probability $1-o(1)$. \\
\hline
$\ccd(r,\rho)$ & if the target rank $r$ is held by two or more agents, detects the collision with probability $\Omega(1)$. \\
\hline
$\ranking$ & upon collision detection, reassigns unique ranks in $[1,n]$ to all agents w.h.p. \\ 
\hline
\end{tabular}
\vspace{0.45cm}

\noindent
We execute these components sequentially in cycles. To ensure the above success probabilities, the sub-protocols $\findtarget$, $\ccd(r,\rho)$, and $\ranking$ require $O(n^{3/2}\log n)$, $O(\goalI)$, and $O(n^{3/2}\log n)$ interactions, respectively; the second bound dominates for all $\rho\in[2,\sqrt{n}]$. By the union bound, each cycle succeeds with probability $1 - (o(1) + (1 - \Omega(1)) + o(1)) = \Omega(1)$. Therefore, the expected number of cycles to success is $O(1)$, yielding $O(\goalI)$ expected interactions overall.

The core novelty of $\ssrk(\rho)$ lies in $\ccd(r,\rho)$. As noted in Section~\ref{sec:contribution}, repeating a collision-detection protocol with periodic resets carries a small but nonzero probability of reset failure, leaving outdated information in some agents. Such residual data causes false-positive detections and triggers rank reassignment, preventing stabilization. Consequently, the polynomial-time, polynomial-state protocol of~\cite{AS26} does not immediately yield a polynomial-time, polynomial-state SS-RK protocol. The authors of \cite{ABF+25} resolve this issue by avoiding periodic resets: they reinitialize their collision-detection module only when a collision is detected. In contrast, our component $\ccd(r,\rho)$ tolerates outdated information: once all agents have initialized their variables at least once, $\ccd(r,\rho)$ incurs no false positives, even if some agents retain outdated information from cycles long past.
Therefore, once a single complete cycle of the three components successfully assigns unique ranks, $\ccd(r,\rho)$ produces no false positives in subsequent cycles, ensuring stabilization.

The tolerance to outdated data in $\ccd(r,\rho)$ enables us to incorporate the time- and space-efficient collision-detection protocol of~\cite{AS26}. 
We obtain $\findtarget$ by modifying their protocol to output a target rank $r$ shared by at least two agents, if one exists. 
Restricting detection to a single rank $r$ significantly reduces the state complexity of $\ccd(r,\rho)$: agents only need to track collisions on $r$, resulting in fewer states than the protocol of~\cite{ABF+25}.

\paragraph*{Paper Organization}
The rest of this paper is organized as follows.
We begin with the component containing the core novelty, $\ccd(r,\rho)$. 
Specifically, Section~\ref{sec:epidemic} introduces the labeled epidemic and analyzes its infection speed; 
Section~\ref{sec:ccd} then presents $\ccd(r,\rho)$,
bounds its stabilization time using the labeled epidemic,
and proves its safety.
Thereafter, Section~\ref{sec:tools} introduces the existing tools used in $\ssrk(\rho)$, 
Section~\ref{sec:cycle} describes how to maintain cyclic execution of the three components, 
and Sections~\ref{sec:ranking} and~\ref{sec:findtarget} present $\ranking$ and $\findtarget$, respectively. 
Finally, Section~\ref{sec:main_theorem} proves Theorem~\ref{theorem:main} by combining the preceding lemmas, 
and Section~\ref{sec:conclusion} concludes the paper and poses open questions.

\section{Labeled Epidemic}
\label{sec:epidemic}
The \emph{epidemic} protocol~\cite{AAE08}, widely used in population protocols, propagates the maximum value of a given variable $\var$ to all agents. It consists of the single transition rule
$$
\res.\var \gets \max(\ini.\var, \res.\var),
$$
where $\ini$ and $\res$ denote the initiator and responder, respectively.
Here, we call an agent with the maximum $\var$ value \emph{red}, and any other agent \emph{blue}.
By this notation, the following lemma holds:
\begin{lemma}[\cite{AAE08}]
\label{lemma:epidemic}
Under the epidemic, all agents become red in $O(n \log n)$ interactions w.h.p.
\end{lemma}

We now introduce the \emph{labeled epidemic}.
\begin{definition}[Labeled Epidemic]
Suppose that all agents have distinct labels in $[1,n]$.
In each interaction, a red initiator turns the responder red
if and only if the initiator's label is smaller.
\end{definition}
\noindent
Note that we do not assume unique agent labels in $\ssrk(\rho)$; such an assumption would trivialize SS-RK.
Nevertheless, the labeled epidemic is useful for analyzing the stabilization time of $\ccd(r,\rho)$, as shown in the next section.

\begin{lemma}
\label{lemma:labeled_epidemic}
Under the labeled epidemic, if the agent labeled $1$ is initially red,
the number of red agents reaches $\lceil n/4 \rceil$ in $O(n \log^2 n)$ interactions in expectation.
\end{lemma}

\begin{proof}
Let $\psi = \lfloor n/(2 \lg n) \rfloor$ and $\kappa = \lfloor \lg n - \lg \lg n -1 \rfloor$. For each $i \in [1,\kappa]$, define $S_i=[(i-1)\psi+1,i\psi]$ and let $A_i$ be the set of agents whose labels lie in $S_i$.

Fix $i \in [1,\kappa-1]$.
Suppose that $2^{i-1}$ agents in $A_i$ are red. We show that the number of red agents in $A_{i+1}$ reaches at least $2^i$ within the next $O(n\log n)$ expected interactions.
As long as fewer than $2^{i}$ agents in $A_{i+1}$ are red, at least $\psi - 2^{i} \ge \psi-2^{\kappa-1} = \Omega(n/\log n)$ agents in $A_{i+1}$ remain blue.
Hence, in each interaction, the number of red agents in $A_{i+1}$ increases by one with probability at least
$$
\frac{2^{i-1} \cdot \Omega(n/\log n)}{n(n-1)} = \Omega\left(\frac{2^i}{n \log n}\right),
$$
because at least $2^{i-1}$ agents are red in $A_i$.
Thus, $O(n \log n/2^i)\cdot 2^{i} = O(n \log n)$ expected interactions suffice to turn $2^{i}$ agents in $A_{i+1}$ red.

Since the agent labeled $1$ is initially red, by induction $2^\kappa = \Theta(n/\log n)$ agents in $A_\kappa$ become red within $O(\kappa n \log n) = O(n \log^2 n)$ expected interactions.

Before the total number of red agents reaches $\lceil n/4 \rceil$, there are always at least $\lfloor n/4 \rfloor$ blue agents whose labels are at least $n/2$. Each such blue agent becomes red upon interacting with one of the $2^\kappa = \Theta(n/\log n)$ red agents in $A_\kappa$, whose labels are all less than $\kappa \psi < n/2$. Consequently, the number of red agents reaches $\lceil n/4 \rceil$ within an additional
$
\lceil n/4 \rceil \cdot \frac{n^2}{\Theta(n/\log n)\cdot \lfloor n/4 \rfloor} = O(n \log n)
$
expected interactions.

Summing the two phases yields $O(n \log^2 n)$ expected interactions in total.
\end{proof}

\begin{algorithm}[t]
\caption{
$\ccd(r,\rho)$ at an interaction where initiator $\ini$ and responder $\res$ meet. 
Whenever this sub-protocol is invoked, $r = \ini.\target = \res.\target$.
} 
\label{al:ccd}
\Variables{
$\vlist \in \{S \subseteq \rname: |S| \le \maxnum\}$, 
$\susp \in \{0,1\}$
}
\Initially{
$a.\detected = 0$, $a.\vlist = \emptyset$, $a.\susp = 0$
}
\uIf{$(\ini,\res) \in A_K \times A_R\, \land\, \res.\rank \notin \ini.\vlist \, \land \, |\ini.\vlist| <  \maxnum$}{
$\res.\vlist \gets \ini.\vlist \gets \ini.\vlist \cup \{\res.\rank\}$\;
}
\uElseIf{$(\ini,\res) \in A_K  \times A_V\, \land\, \res.\vlist \nsubseteq \ini.\vlist$}{
$\res.\susp \gets 1$\;
}
\ElseIf{$(\ini,\res) \in A_V \times A_V\, \land\, \ini.\vlist \subseteq \res.\vlist$}{
$\res.\susp \gets \max(\ini.\susp,\res.\susp)$\;
}
$
\res.\detected \gets 
\begin{cases}
1 & \tif \ini.\rank = \res.\rank \\
\max(\ini.\susp,\res.\detected) & \telseif 
(\ini,\res) \in A_V \times A_K \land \ini.\vlist \subseteq \res.\vlist\\
\max(\ini.\detected,\res.\detected) & \totherwise
\end{cases}
$\;
\end{algorithm}

\section{Careful Collision Detection}
\label{sec:ccd}
Given a target rank $r$, the subprotocol $\ccd(r,\rho)$, shown in Algorithm~\ref{al:ccd},
detects whether at least two agents share rank $r$
and stores the result in each agent's variable $\detected \in \{0,1\}$,
where $1$ indicates detection.
Once any agent sets its $\detected$ flag to $1$, this information propagates to the entire population via the epidemic protocol in $O(n \log n)$ interactions w.h.p.~(line~7).
Therefore, it suffices for some agent to detect a collision, if one exists.
If agents collide on some rank $r' \neq r$, $\ccd(r,\rho)$ may report a collision even when only one agent holds rank $r$.
Such misdetections are harmless, since any detected collision should trigger a rank reassignment.

As we will describe in Section~\ref{sec:cycle}, each agent maintains variables 
$\mode \in \{\bot, \modeF, \modeD, \modeR\}$ and $\target \in [1,n]$.
The main function invokes $\ccd(r,\rho)$ only when two agents with $\mode = \modeD$
and the same $\target$ interact, where the target rank $r$ is given by their common $\target$ value.
In $\ccd(r,\rho)$, each agent maintains a set variable $\vlist \subseteq \rname$ with $|\vlist| \le \maxnum$,
where $\rname = [1,\mname]$, and a Boolean variable $\susp \in \{0,1\}$.
So, this subprotocol only uses $2^{2\rho \log^2 \rho+O(1)}$ states.
When an agent $a$ switches to mode $\modeD$ from another mode, it initializes
$a.\detected = 0$, $a.\vlist = \emptyset$, and $a.\susp = 0$.

We classify agents in $\ad = \{a \in A \mid a.\mode = \modeD\}$ into four roles as follows:
\begin{align*}
&A_K = \{a \in \ad \mid a.\rank = a.\target \}, \quad
A_V = \{a \in \ad \mid a.\rank \in a.\vlist\},\\
&A_R = \{a \in \ad \setminus A_K \mid a.\rank \in \rname \land a.\vlist = \emptyset \land a.\susp = 0\}, \quad
A_C = \ad \setminus (A_K \cup A_V \cup A_R).
\end{align*}
Agents in $A_K$, $A_V$, $A_R$, and $A_C$ are called \emph{kings}, \emph{vassals}, \emph{ronins} (i.e., a masterless warrior),
and \emph{commoners}, respectively.
This subprotocol ignores commoners except for direct collision detection:
whenever two agents of the same rank interact,
the responder immediately sets $\detected \gets 1$ (line~7).

Direct collision detection simplifies the analysis:
without loss of generality, we may restrict attention to initial configurations
with few colliding pairs, particularly those with few kings and many ronins,
as implied by the following simple observation.

\begin{observation}[Generalization of Lemma~3 of \cite{AS26}]
\label{obs:many_ronins}
If there are $x\ge1$ colliding pairs (i.e., unordered pairs of agents sharing the same rank), the direct collision detection protocol detects a collision within $O(n^2/x)$ interactions both in expectation and with probability $\Omega(1)$.
\end{observation}
\begin{proof}
Each interaction involves a colliding pair with probability at least $x/\binom{n}{2} = \Omega(x/n^2)$, so $O(n^2/x)$ expected interactions suffice for collision detection.
The constant‐probability bound then follows by Markov’s inequality.
\end{proof}

The goal of a king is to detect the presence of another king. 
To this end, each king adopts up to $\maxnum$ vassals and records their ranks in $\vlist$: 
when a king $a_K$ with $|a_K.\vlist| < \maxnum$ meets a ronin $a_R$ such that $a_R.\rank \notin a_K.\vlist$, it adds $a_R.\rank$ to $a_K.\vlist$ and copies the updated list to $a_R.\vlist$, thereby making $a_R$ a vassal (lines~1--2). 
After $a_K$ adopts $\maxnum$ vassals $a_1,\dots,a_{\maxnum}$ in this order, we have
$$
a_K.\vlist = \{a_i.\rank \mid i \in [1,\maxnum]\},
\quad
\forall j \in [1,\maxnum]: a_j.\vlist = \{a_i.\rank \mid i \in [1,j]\}.
$$
Hence, if one of those vassals, say $a_i$, encounters a king $a'_K$ with $a_i.\vlist \nsubseteq a'_K.\vlist$, this implies that at least two kings exist. In this case, the vassal $a_i$ sets its $\susp$ flag to $1$ (lines~3--4), and this flag propagates to vassals via a variant of the epidemic protocol. Specifically, if two vassals $a$ and $b$ meet where $a.\vlist \subseteq b.\vlist$ and $a.\susp = 1$, then $b$ also raises its $\susp$ flag (lines~5--6). This propagation can be regarded as the labeled epidemic that involves
only $\maxnum-i+1$ agents $a_i,a_{i+1},\dots,a_{\maxnum}$.
When $a_K$ later meets one of those vassals with $\susp = 1$,
it sets its $\detected$ flag to $1$ (line~7).

A vassal $a_V$ may raise its $\susp$ flag even when there is exactly one king with $\rank = a_V.\target$ if it is outdated—i.e., its king has already started the next cycle and reset its vassal list, so the vassal’s $\vlist$ is no longer a subset of the king’s $\vlist$. 
(We formally define outdated vassals in Section \ref{sec:safety}.)
Although the phase clock synchronizes the population w.h.p., making this event extremely rare, we must still handle it to solve SS-RK (whereas loosely-stabilizing ranking could safely ignore it). Our variant of the epidemic protocol addresses this issue: even if an outdated vassal raises its $\susp$ flag, the information propagates only among outdated vassals, and hence the unique king never raises its $\detected$ flag. This \emph{safety property} (i.e., the absence of false positives) is formalized as Lemma \ref{lemma:closure} in Section \ref{sec:safety}. 

\subsection{Liveness}
To guarantee the absence of false positives, we restrict the propagation of $\susp$.
As noted above, this can be viewed as a labeled epidemic, which allows us to give the following lemma via Lemma~\ref{lemma:labeled_epidemic}.

\begin{lemma}
\label{lemma:detect}
Let $r$ be an integer in $[1,n]$. 
Let $C$ be a configuration in which all agents have $\susp = 0$, $\vlist = \emptyset$, $\target = r$, and $\mode = \modeD$, and at least two agents have $\rank = r$. 
Starting from $C$, $\ccd(r,\rho)$ reaches a configuration in which every agent has $\detected = 1$ within $O(\goalI)$ interactions with probability $\Omega(1)$.
\end{lemma}

\begin{proof}
Fix any king $a_K$ in $C$. It suffices to show that $a_K$ raises its $\detected$ flag within $O(\goalI)$ interactions with probability $\Omega(1)$; by Lemma~\ref{lemma:epidemic}, this flag then propagates to all agents within $O(n\log n)$ interactions w.h.p. We may assume w.l.o.g.\ that $\rho=\omega(1)$, since if $\rho=O(1)$, direct collision detection alone raises $\detected$ flag within $O(n^2)$ interactions with probability $\Omega(1)$, as desired.

Let $x_C$, $n_K$, and $n_R$ be the numbers of colliding pairs (i.e., unordered pairs of agents sharing the same rank), kings, and ronins, respectively, in configuration $C$. By Observation~\ref{obs:many_ronins}, we may assume w.l.o.g.\ that $x_C = o(\rho/\log \rho)$. Since $x_C \ge \binom{n_K}{2}$, this implies $n_K = o(\sqrt{\rho/\log \rho})$.
Because $C\in\cinit(\modeD,r)$ contains no vassals, only ronins (and kings if $r \le \mname$) occupy ranks in $\rname=[1,\mname]$. If $n_R+n_K<|\rname|$, the pigeonhole principle gives $x_C \ge |\rname|-n_K-n_R$, thus $n_R \ge |\rname| - n_K - x_C$;
Otherwise, $n_R > |\rname|-n_K$.
Hence $n_R \ge |\rname|-o(\sqrt{\rho/ \log \rho}) - o(\rho/\log \rho)=\Omega(\mname)$.

Since each of the $n_K$ kings adopts at most $\maxnum$ vassals, there are always at least 
$n_R - n_K \cdot \maxnum = \Omega(\mname)$ ronins. 
Thus, in each step, $a_K$ meets a ronin with $\rank \notin a_K.\vlist$ with probability 
$(\Omega(\mname) - x_C)/n(n-1) = \Omega(\rho^2/n^2)$, 
since at most $x_C$ ronins have ranks already contained in $a_K.\vlist$ 
(each such ronin yields a colliding pair). 
Hence, $a_K$ adopts $\maxnum$ vassals within 
$\maxnum \cdot O(n^2/\mname) = O(\goalI)$ expected interactions.

Let $a_1,\dots,a_{\maxnum}$ be the vassals adopted by $a_K$ in this order. 
At any moment,
at most $2x_C = o(\rho/\log \rho)$ agents share a rank with another agent. 
Therefore, $a_1$ has a unique rank with probability at least 
$(|A'| - 2x_C)/|A'| = 1 - o((\rho \log \rho)/\rho^2) = 1-o(1)$, 
where $A'$ is the set of ronins immediately before $a_K$ adopts $a_1$. 
Hence, w.l.o.g., we assume that $a_1$ has a unique rank.

Let $a'_K$ be any king in $A_K \setminus \{a_K\}$. 
After $a_K$ adopts $\maxnum$ vassals, 
$a'_K$ meets one of the first $\lfloor \maxnum/2 \rfloor$ vassals 
(i.e., $a_1,\dots,a_{\lfloor \maxnum/2 \rfloor}$) 
within $O(n^2/\maxnum) = O(\goalI)$ expected interactions. 
Let $a_i$ denote this vassal. 
Since $a_1$ has a unique rank and $a_i.\vlist$ contains $a_1.\rank$, 
we have $a_i.\vlist \nsubseteq a'_K.\vlist$, 
and hence $a_i$ raises its $\susp$ flag.
For any $j,k$ with $i \le j \le k \le \maxnum$, 
$a_j.\vlist \subseteq a_k.\vlist$. 
Thus, the propagation of $\susp$ among 
$a_i, a_{i+1}, \dots, a_{\maxnum}$ 
can be viewed as a labeled epidemic with $a_i$ initially red. 
In each step, a pair of these vassals meets with probability 
$\Omega(\rho^2 \log^2 \rho / n^2)$. 
By Lemma~\ref{lemma:labeled_epidemic}, 
at least $\maxnum/8$ of them raise $\susp$ flag within
$$
O\left(\frac{n^2}{\rho^2 \log^2 \rho}\right)
\cdot O((\rho \log \rho)\log^2(\rho \log \rho))
= O(\goalI)
$$
expected interactions.
Finally, $a_K$ meets one of these vassals and raises $\detected$ flag
within $O(n^2/(\rho \log \rho)) = O(\goalI)$ expected interactions.

Therefore, the total expected number of interactions is $O(\goalI)$. 
By Markov's inequality, this implies a constant success probability.
\end{proof}

\subsection{Safety}
\label{sec:safety}
In this subsection, we prove the safety property (i.e., the absence of false positives), stated as Lemma~\ref{lemma:closure}. 
To state the lemma formally, we first define outdated vassals and safe configurations.

\begin{definition}[Outdated Vassals]
Define a function $f_K: A \to A \cup \{\bot\}$ as follows.
For any agent $a \in \ad$, if there exists \emph{exactly one} king $a_K \in A_K$ such that $a.\target = a_K.\target = a_K.\rank$, then $f_K(a) = a_K$; otherwise, $f_K(a) = \bot$.
A vassal $a_V \in A_V$ is called an \emph{outdated vassal}
if $f_K(a_V) = \bot$ or $a_V.\vlist \nsubseteq f_K(a_V).\vlist$.
We denote by $A_{OV}$ the set of outdated vassals.
\end{definition}

\begin{definition}[Safe Configurations]
\label{def:safe}
A configuration is \emph{safe} if the following conditions hold:
(C1) all agents have distinct ranks;
(C2) no agent is in mode $\modeR$;
(C3) all agents have $\detected = 0$; and
(C4) all agents in $A_V \setminus A_{OV}$ have $\susp = 0$.
\end{definition}

\begin{remark}
\label{remark:spec}
As stated in Section~\ref{sec:overview}, collision detection and rank reassignment are handled by the subprotocols $\ccd(r,\rho)$ and $\ranking$, respectively, and $\ranking$ is executed only when a rank collision is detected. 
Therefore, in $\ssrk(\rho)$, no agent raises the $\detected$ flag outside $\ccd(r,\rho)$, 
no agent changes its mode to $\modeR$ (the mode for $\ranking$) unless $\detected = 1$, 
and no agent changes its $\rank$ unless its mode is $\modeR$. 
(A complete specification of $\ssrk(\rho)$ is given in Section~\ref{sec:cycle}.)
\end{remark}

\begin{lemma}
\label{lemma:closure}
The set of safe configurations is closed under $\ssrk(\rho)$.
That is, if a safe configuration $C$ changes to $C'$ by a single interaction under $\ssrk(\rho)$, then $C'$ is also safe.
\end{lemma}

\begin{proof}
By Remark~\ref{remark:spec}, (C1) and (C2) are preserved in $C'$. 
Condition (C3) is preserved, since in $\ccd(r,\rho)$ only a king $a_K$ may raise the $\detected$ flag, which would require a non-outdated vassal with $\susp = 1$ and $\target = a_K.\rank$, contradicting the safety of $C$. 
Condition (C4) involves only the variables $\susp$, $\vlist$, and $\mode$, and the first two are updated only by $\ccd(r,\rho)$.
It therefore suffices to verify that (C4) is preserved in the following cases:
(i) an agent in $A \setminus \ad$ switches to mode $\modeD$,
(ii) an agent in $\ad$ leaves mode $\modeD$, or
(iii) $\ccd(r,\rho)$ is executed.

It remains to show that no non-outdated vassal with $\susp = 1$ is created in cases (i)--(iii). 
A new vassal arises only when a ronin is adopted by a king. (Event (i) never creates a vassal.) 
Since every ronin has $\susp = 0$ by definition, any newly created vassal also has $\susp = 0$. 
Hence only two possibilities remain: 
a non-outdated vassal raises its $\susp$ flag, or 
an outdated vassal with $\susp = 1$ becomes non-outdated.

The former case never occurs. 
By definition, a non-outdated vassal never raises its $\susp$ flag upon meeting a king. 
Hence it can only happen when a non-outdated vassal $a$ meets a vassal $b$ with 
$b.\susp = 1$ and $b.\vlist \subseteq a.\vlist$. 
Since $C$ is safe, such a $b$ must be outdated. 
As $a.\target = b.\target$, we have $f_K(a) = f_K(b)$. 
If $f_K(a) = \bot$, then $a$ is outdated in $C$, a contradiction. 
If $f_K(a) \neq \bot$, then $b.\vlist \nsubseteq f_K(a).\vlist$ while $a.\vlist \subseteq f_K(a).\vlist$, contradicting $b.\vlist \subseteq a.\vlist$.

The latter case also never occurs. 
An outdated vassal $a_V$ can become non-outdated after a single interaction,
only if either a new king is created or $f_K(a_V)$ adds $a_V.\rank$ to its $\vlist$. 
A newly created king initializes $\vlist$ to $\emptyset$, so it cannot make $a_V$ non-outdated. 
If $f_K(a_V)$ adds $a_V.\rank$, it must meet a ronin of rank $a_V.\rank$, but no such ronin exists since ranks are distinct in $C$.
\end{proof}

As we will see in Section~\ref{sec:cycle}, $\ssrk(\rho)$ synchronizes the population using the phase clock protocol, and all agents synchronously maintain the cycle of their modes,
i.e., $\bot \to \modeF \to \modeD \to (\modeR \to) \bot \to \modeF \to \modeD \to \cdots$, where they enter $\modeR$ only when they raise the $\detected$ flag.
Therefore, once unique ranks are assigned by $\ranking$ in mode $\modeR$ and all agents subsequently switch to $\bot$, the population is in a safe configuration.
Since each cycle finishes in $O(\goalI)$ interactions w.h.p., the population reaches a safe configuration in $O(\goalI)$ expected interactions.
By Lemma~\ref{lemma:closure}, no false positive occurs thereafter.

\section{Existing Tools}
\label{sec:tools}

\subsection{Loosely-stabilizing Leader Election}
Several components of $\ssrk(\rho)$ assume a unique leader exists; to satisfy this, we run in parallel the LS-LE protocol of \cite{SEIM21}, denoted by $\lsle$ here (line~8 in Algorithm \ref{al:main}).  

The following lemma allows us to assume, throughout the rest of this paper, except for the proof of Theorem~\ref{theorem:main} in Section~\ref{sec:main_theorem}, that there is exactly one agent $a_L$ with $a_L.\leader = 1$, and every other agent $a \in A \setminus \{a_L\}$ has $a.\leader = 0$, where $\leader \in \{0,1\}$ is the leader-indicator variable. This assumption will be removed in the proof of the main theorem using this lemma.

\begin{lemma}[immediate consequence of \cite{SEIM21}]
\label{lemma:lsle}
For any $T=\poly(n)$, there exists a constant $\tau=\Theta(1)$ and a set $\csafe$ of configurations such that:
  (i) starting from any configuration, $\lsle(\tau)$ enters $\csafe$ within $O(\tau n\log n)$ expected interactions, and
  (ii) once in $\csafe$, it maintains a unique leader for at least $T$ interactions w.h.p. 
\end{lemma}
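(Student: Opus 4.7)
The plan is to read the lemma off directly from the main theorem of \cite{SEIM21}. For every integer parameter $\tau\ge 1$, that work constructs a set of safe configurations $\csafe$ and shows: (a) from any configuration, $\lsle(\tau)$ enters $\csafe$ within $O(\tau\log n)$ expected parallel time; and (b) from any configuration in $\csafe$, it preserves the unique-leader invariant for $\Omega(n^\tau)$ parallel time. I would take $\csafe$ from \cite{SEIM21} and leave $\tau$ to be fixed below. Claim (i) is then immediate: multiplying the parallel-time bound by $n$ gives $O(\tau n\log n)$ expected interactions, matching the statement.

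For part (ii), I would upgrade the $\Omega(n^\tau)$ expected holding bound of \cite{SEIM21} into a w.h.p.\ path-wise persistence bound. Since $T=\poly(n)$, fix a constant $c_T$ with $T\le n^{c_T}$. The analysis in \cite{SEIM21} bounds the per-interaction probability of any transition in $\csafe$ that can break the unique-leader invariant by $n^{-\tau-\Theta(1)}$; this is the very mechanism through which the $\Omega(n^\tau)$ expected bound is established there (a leader-corrupting event requires an interaction with a scarce token, whose density is $O(n^{-\tau})$, combined with the $\Theta(1/n^2)$ probability of selecting any specific ordered pair, normalized to parallel time). A union bound over the at most $T\le n^{c_T}$ interactions in question then gives total failure probability $O(n^{c_T-\tau-\Theta(1)})$. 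Taking $\tau$ to be any constant at least $c_T+\eta+\Theta(1)$ drives this to $O(n^{-\eta})$, meeting the w.h.p.\ requirement of Definition~\ref{def:whp}; since $c_T$ and $\eta$ are constants, $\tau=\Theta(1)$ as stipulated.

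The only subtle step is converting expectation into a w.h.p.\ statement, since Markov's inequality alone is inadequate for path-wise persistence. The resolution relies on the per-step exit-probability estimates that already appear inside the proof of the $\Omega(n^\tau)$ expected bound in \cite{SEIM21}, so no fresh probabilistic analysis is needed---this is precisely the sense in which the lemma is labeled an ``immediate consequence.'' If, on inspection, \cite{SEIM21} phrases the holding bound only in expectation without explicitly exposing per-interaction transition probabilities, the backup plan is to run $\lsle(\tau')$ for a slightly larger constant $\tau'$ and apply Markov's inequality in restart epochs of length $\Theta(n^{\tau'/2})$, which yields the same qualitative w.h.p.\ conclusion at no asymptotic cost in either time or space.
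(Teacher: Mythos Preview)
Your primary approach is on the right track but invokes \cite{SEIM21} at the wrong granularity, and your backup plan does not work.

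The paper's proof does not attempt to extract a per-interaction failure probability from the internals of \cite{SEIM21}. Instead it cites a block-level renewal result (Lemma~15 in \cite{SEIM21}): from any configuration in $\csafe$, after $\Theta(\tau n\log n)$ interactions the population is again in $\csafe$ and has maintained a unique leader throughout, with probability $1-n^{-\tau}$. A union bound over the $O\bigl(T/(\tau n\log n)\bigr)=\poly(n)$ such blocks needed to cover $T$ interactions then yields total failure probability $\poly(n)\cdot n^{-\tau}$, which is $O(n^{-\eta})$ once the constant $\tau$ is taken large enough relative to the degree of $T$ and to $\eta$. This is the ``immediate consequence'' the lemma label advertises; no speculation about scarce tokens or per-step transition probabilities is needed. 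Your union bound over individual interactions would also work \emph{if} the per-interaction bound you assert were actually stated in \cite{SEIM21}, but you yourself flag that it may not be, and the paper sidesteps the issue entirely by citing the block-level statement.

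Your fallback via Markov's inequality is genuinely broken. Markov gives only upper tail bounds, $\Pr[X\ge t]\le E[X]/t$; it cannot convert a lower bound on the \emph{expected} holding time into a w.h.p.\ lower bound on the holding time itself. A Markov chain can have arbitrarily large expected holding time from every state in $\csafe$ while still exiting in one step with constant probability (compensated by other trajectories that stay forever), so no restart-epoch scheme rescues the argument without the additional per-block structure that the renewal lemma supplies.
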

\noindent
For completeness, we include the proof of the above lemma in Appendix~\ref{sec:proofs}.

\begin{note}
Readers who are already familiar with population protocols may skip Section~\ref{sec:phase_clock}.
\end{note}

\subsection{Phase Clock with a Single Leader}
\label{sec:phase_clock}
Many variants of the phase clock have been used to loosely synchronize the population in protocol design.
We use the simplest variant: the single-leader phase clock of~\cite{AAE08}. This protocol assumes a fixed unique leader.
Each agent maintains a single variable $\clock \in [0,\constM\cdot T]$,
where $\constM=\Theta(1)$  and $T = \poly(n)$ are protocol parameters.
The initial value of $\clock$ is set to $0$ for all agents, including the leader.\footnote{Originally in \cite{AAE08}, the leader’s clock starts at $0$ and all others at $-1$. We instead initialize $\clock$ to $0$ for every agent to simplify the protocol description; this change does not affect the correctness of Lemma~\ref{lemma:phaseclock}.}
When an initiator $\ini$ and responder $\res$ interacts, $\ini$ does nothing and $\res$ updates its clock as follows:
\begin{align*}
\res.\clock \gets 
\begin{cases}
\min(\res.\clock + 1,\constM \cdot T) & \tif \left (
\begin{aligned}
&\res.\leader = 1 \\
&\land \ini.\clock = \res.\clock
\end{aligned}
\right )\\
\max(\ini.\clock,\res.\clock)& \totherwise,
\end{cases}
\end{align*}
where $\leader$ is an indicator variable with $a.\leader = 1$ if and only if $a$ is the unique leader.
We denote this protocol by $\phaseclock(T,\constM)$.

We partition the interval $[-1,\constM\cdot T]$ into \emph{phases} of uniform length $\constM=\Theta(1)$. Specifically, for each $i \in [-1,T]$, we say agent $a$ is in phase $i$ if $\constM\cdot i \le a.\clock < \constM\cdot (i+1)$. For each $i\in[0,T]$, let $\cphase(i)$ denote the set of configurations in which all agents are in phase $i$.
\begin{lemma}[Immediate consequence of Lemma~1 and Theorem~1 in \cite{AAE08}]
\label{lemma:phaseclock}
Let $C_0$ be the configuration in which every agent has $\clock=0$. For any $T=\poly(n)$ and any constant $\cmin=\Theta(1)$, there exists a constant $\constM$ such that, with high probability, an execution of $\phaseclock(T,\constM)$ starting from $C_0$ simultaneously satisfies for all $i\in[1,T]$:
\begin{itemize}
  \item the population enters $\cphase(i)$ within $O(i\cdot n\log n)$ interactions, and
  \item after entering $\cphase(i)$, remains there for at least $\cmin\cdot n\lg n$ interactions.
\end{itemize}
\end{lemma}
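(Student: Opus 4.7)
The plan is to derive both bounds from the analysis of the single-leader phase clock in \cite{AAE08}, choosing $\constM$ sufficiently large relative to $\cmin$ and the constant $\eta$ hidden in the w.h.p.\ guarantee. I would first establish by induction from $C_0$ that the leader's clock is always a maximum of the population's clocks: non-leaders only take the maximum of two clocks, and the leader increments only when the initiator already matches its value, so the leader's clock remains $\ge$ every other clock throughout the execution. Consequently, the population enters $\cphase(i)$ exactly when the leader's clock first reaches $\constM \cdot i$, and both claims reduce to bounding the number of interactions needed for the leader to perform $\constM \cdot i$ increments (``ticks'').

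Next I would argue that each tick takes $\Theta(n \log n)$ interactions w.h.p. For the upper bound, suppose the leader currently has clock $v$. Except for leader-responder interactions, the dynamics on $\clock$ reduce to the epidemic, so by Lemma~\ref{lemma:epidemic} every agent's clock reaches $v$ within $O(n\log n)$ interactions w.h.p.; once all agents hold $v$, the leader ticks the very first time it becomes the responder, contributing only $O(n)$ further interactions in expectation and $O(n \log n)$ w.h.p. For the lower bound, immediately after a tick only the leader holds the new value $v+1$, and the count $X_t$ of agents with clock $v+1$ grows via an epidemic-like process whose doubling time is $\Theta(n)$ interactions; since the per-interaction tick probability is at most $X_t/n^2$, a standard tail bound on the integrated tick rate shows the next tick cannot occur before $\Omega(n \log n)$ interactions w.h.p.

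Finally I would union-bound over all $i \in [1,T]$. Because $T = \poly(n)$, tuning the hidden constants in the epidemic and tick analyses so that each individual event fails with probability at most $n^{-\eta - c}$ for a sufficiently large constant $c$ preserves the overall $1 - O(n^{-\eta})$ guarantee after the union bound. The main obstacle I foresee is selecting a single $\constM$ that satisfies both conclusions: the upper bound on arrival permits any constant $\constM$, but the lower bound on phase duration requires $\constM$ large enough that $\constM \cdot c' \cdot n \log n \ge \cmin \cdot n \lg n$ w.h.p., where $c'$ is the constant in the per-tick lower bound. Taking $\constM \ge \cmin / (c' \log 2)$ with a small extra margin to absorb the w.h.p.\ slack resolves this tension and delivers both parts of the lemma simultaneously for every $i \in [1,T]$.
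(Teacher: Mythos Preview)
The paper supplies no proof of its own here---the lemma is simply recorded as an immediate consequence of \cite{AAE08}---and your sketch is precisely the argument one extracts from that reference, so the approaches coincide.

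One small imprecision worth correcting: it is not true that the population enters $\cphase(i)$ \emph{exactly} when the leader's clock first reaches $\constM\cdot i$; at that moment some non-leader agents may still have clocks strictly below $\constM\cdot i$. Entry into $\cphase(i)$ occurs only once the \emph{minimum} clock catches up (via the epidemic) while the leader is still below $\constM\cdot(i+1)$. This costs one extra $O(n\log n)$-interaction epidemic round and is absorbed by the choice of $\constM$, so your overall plan is unaffected.
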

\noindent
Note that $\clock$ is non-decreasing in this protocol.  
Hence, by Lemma~\ref{lemma:phaseclock}, the population enters $\cphase(1)$, $\cphase(2)$, $\dots$, $\cphase(T)$ in this order w.h.p.

\begin{algorithm}[t]
\caption{$\ssrk(\rho)$ at an interaction where initiator $\ini$ and responder $\res$ meet.
}
\label{al:main}
\Variables{
$\rank \in [1,n]$,
$\clock \in [0,\constM \cdot \tmax]$,
$\mode \in \{\bot,\modeF,\modeD,\modeR\}$,\\
\hspace{1.63cm}
$\target \in [1,n]$,
$\detected\in \{0,1\}$,\\
\hspace{1.63cm}
$\rst \in [0,\constR \cdot \logn]$,
$\delay \in [0,\constD\cdot \logn]$
}
\Notation{
$\phase(a) = \lfloor a.\clock/\constM \rfloor$
}
Execute $\lsle(\constL)$
\violet{\tcp*{$\constL=\Theta(1)$ is a sufficiently large constant}}
$\ini.\rst \gets \res.\rst \gets \max(\ini.\rst-1,\res.\rst-1,0)$\;
\uIf 
{$\res.\rst > 0$}{
$\res.\clock \gets \ini.\clock \gets 0$ \violet{\tcp*{Initialize $\phaseclock()$}}
$\res.\delay \gets \ini.\delay \gets \constD \cdot \logn$\;
$\res.\mode \gets \ini.\mode \gets \bot$;\ $\res.\detected \gets \ini.\detected \gets 0$\;
}
\uElseIf{$\res.\leader = 1 \land \res.\delay >0$}{
$\res.\delay \gets \res.\delay -1$\;
}
\Else{ 
Execute $\phaseclock(\tmax,\constM)$ \violet{\tcp*{Results in $\phase(\ini) \le \phase(\res)$}}
\If{$\phase(\res)=\tmax$}{
$\res.\rst \gets \constR\cdot \logn$
}
Let $\pini = \phase(\ini),\ \pres = \phase(\res)$\;
Let $\rini = \ini.\target,\ \rres = \res.\target, \dres = \res.\detected$\;
$\res.\mode \gets
\begin{cases}
\modeF & \tif \pres = T_1-1\\
\modeD & \tif \pres = T_2-1\\
\modeR & \tif \pres = T_3-1 \land \dres = 1\\
\res.\mode & \totherwise
\end{cases}
$
\violet{\tcp*{Auto-initialize variables upon mode change}}
\If(\violet{\tcp*[f]{$\pini \le \pres$}}){$\ini.\mode = \res.\mode \land (\res.\mode=\modeD \to \rini=\rres)$}{
Execute
$
\begin{cases}
\findtarget& \tif [\pini,\pres] \subseteq [T_1,T_2-2] \land \res.\mode = \modeF\\
\ccd(r_2,\rho) & \tif [\pini,\pres] \subseteq [T_2,T_3-2] \land \res.\mode = \modeD\\
\ranking& \tif [\pini,\pres] \subseteq [T_3,T_4-2] \land \res.\mode = \modeR\\
\text{nothing}& \totherwise.
\end{cases}
$
}
}
\end{algorithm}

\begin{table}[t]
\caption{
Initialization of each component
}
\label{table:init}
\centering
\begin{tabular}{c c}
\hline
 component & initialization 
 \\ 
 \hline
 \findtarget&
 $a.\target = 1$, 
$a.\parity = (T_1 \bmod{2})$
 \\
$\ccd$ &$a.\detected = 0$, $a.\vlist = \emptyset$, $a.\susp = 0$
 \\
 $\ranking$&
 \begin{tabular}{c}
 $a.\ind = 0$, 
 $a.\rank =
\begin{cases}
1& \tif a.\leader = 1\\
\bot& \totherwise
\end{cases}
$,\\
$a.\nonce = 0$, $a.\parity = (\tmid \bmod 2)$
\end{tabular}
 \\
\hline
\end{tabular} 
\end{table}

\section{Cyclic Execution of Sub-protocols}
\label{sec:cycle}
Using the phase clock~\cite{AAE08} described in Section~\ref{sec:phase_clock}, $\ssrk(\rho)$ invokes the sub-protocols $\findtarget$, $\ccd(r,\rho)$, and $\ranking$ in phases $[T_1,T_2-2]$, $[T_2,T_3-2]$, and $[T_3,T_4-2]$, respectively (line~24), where
$$
T_1 = 2,\quad
T_2 = T_1 + \constT \left\lceil \sqrt{n}\ \right\rceil,\quad
T_3 = T_2 + \constT \left\lceil \frac{n \lg \rho }{\rho \log n} \right\rceil,\quad
T_4 = T_3 + 2\constT \left\lceil \sqrt{n} \right\rceil,
$$
and $\constT=\Theta(1)$ is a sufficiently large constant to ensure the correctness of each component.
If an agent enters phase $\tmax$, it raises a reset flag (lines~18--19), returning all agents to the initial phase $0$ by the epidemic protocol (lines 9--13). Since each phase completes in $\Theta(n\log n)$ interactions w.h.p.\ by Lemma~\ref{lemma:phaseclock}, one cycle of $\tmax$ phases requires only $\Theta(\goalI)$ interactions w.h.p.

The variable $\mode \in \{\bot,\modeF,\modeD,\modeR\}$ is key to self-stabilization.
The modes $\modeF$, $\modeD$, and $\modeR$ correspond to the sub-protocols $\findtarget$, $\ccd$, and $\ranking$, respectively. When an agent switches its $\mode$ from $x$ to a different mode $y \neq \bot$, it runs the initialization procedure for component $y$.
The initialization procedures for $\ccd$, $\ranking$, and $\findtarget$ are described in Sections~\ref{sec:ccd}, \ref{sec:ranking}, and~\ref{sec:findtarget}, respectively.
They are also specified in Algorithms~\ref{al:ccd},~\ref{al:ranking}, and~\ref{al:findtarget} and summarized in Table \ref{table:init}.

Note that $\ccd(r,\rho)$ requires the target rank $r$; upon entering mode $\modeD$, each agent $a$ adopts $a.\target$ for $r$.
Mode switching occurs in phases $T_1-1$, $T_2-1$, and $T_3-1$, i.e., immediately before each component starts (line~22). An agent $a$ enters mode $\modeR$ only if $a.\detected=1$, i.e., its rank-collision-detected flag is raised,
so rank reassignment cannot begin until $\ccd(r,\rho)$ detects a collision. Each component is executed only when both interacting agents share the same mode; additionally, during mode $\modeD$ (collision detection), they must have the same target rank (line~23).

The first part of the main function (lines 9–19) brings all agents into phase~0, resetting their modes to $\bot$, within $O(\goalI)$ interactions w.h.p., regardless of the initial configuration (Lemma~\ref{lemma:reset}).
By Lemma~\ref{lemma:phaseclock}, this ensures the population completes one full cycle of the three components in the next $\Theta(\goalI)$ interactions w.h.p.
This reset mechanism relies on two timers, $\rst\in[0,\constR\cdot \logn]$ and $\delay\in[0,\constD\cdot\logn]$, where a constant $\constD=\Theta(1)$ is chosen sufficiently larger than $\constR=\Theta(1)$.
We omit the detailed explanation of the update rules for $\rst$ and $\delay$ (lines 9–19) in the main text, 
as they follow standard techniques~\cite{BCC+21,SEIM21}, in particular the \protocol{PropagateReset} subprotocol of~\cite{BCC+21}; 
their precise definitions appear in the pseudocode.

\begin{definition}
\label{def:initialized_configurations}
Let $\calI$ be the set of configurations in which
$a.\clock = a.\rst = a.\delay = a.\detected=0$ and $a.\mode = \bot$ hold for all $a \in A$.
\end{definition}

\begin{lemma}
\label{lemma:reset}
Starting from any configuration, the population under $\ssrk(\rho)$ enters $\calI$ within $O(\tmax\cdot n\log n)$ interactions w.h.p. 
\end{lemma}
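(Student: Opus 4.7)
The plan is to bound the time to enter $\calI$ in four successive stages: the stabilization of a unique leader via $\lsle$, the monotone ascent of the maximum clock value to $\constM\cdot\tmax$, the epidemic-style broadcast of the reset flag $\rst$, and the subsequent decay of $\rst$ and $\delay$.

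First, I would invoke Lemma~\ref{lemma:lsle} with the sufficiently large constant $\constL$ to guarantee that within $O(n\log n)$ interactions w.h.p.\ a unique leader $a_L$ is fixed and persists for $\poly(n)$ interactions, which covers the entire horizon $\tmax\cdot n\log n$. Next, I would observe that line~10 never decreases any $a.\clock$, so $M(t):=\max_{a\in A}a.\clock$ is non-decreasing, and the reset is armed via line~12 as soon as $M(t)=\constM\cdot\tmax$. To bound the time until this happens, I would first argue that any initially nonzero $\rst$ values decay to $0$ within $O(n\log n)$ interactions w.h.p.\ (because $\constR=\Theta(1)$ and each agent participates in $\Theta(\log n)$ interactions per $O(n\log n)$ window), and that the leader's $\delay$ similarly reaches $0$ in another $O(n\log n)$ interactions w.h.p.\ via line~8. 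Once both counters are cleared, a variant of Lemma~\ref{lemma:phaseclock}, adapted to start from an arbitrary rather than all-zero clock configuration, gives $O(n\log n)$ interactions per phase for $M(t)$ to advance, totalling $O(\tmax\cdot n\log n)$ interactions for $M(t)$ to reach $\constM\cdot\tmax$.

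Once some responder sets $\rst\gets\constR\logn$, the max-minus-$1$ rule of line~2 behaves as an epidemic variant of Lemma~\ref{lemma:epidemic}: choosing $\constR$ sufficiently large, the reset signal reaches every agent within $O(n\log n)$ further interactions w.h.p.\ before decaying to zero anywhere. Whenever $\res.\rst>0$, lines~3--6 force both participants' $\clock$, $\mode$, $\detected$ into their $\calI$-values while raising their $\delay$ to $\constD\logn$. After the broadcast, another $O(n\log n)$-interaction window lets $\rst$ decay to $0$ everywhere (no fresh trigger is possible since all clocks are now $0$), and the delay-propagation structure of lines~2--12, which mirrors the \protocol{PropagateReset} sub-protocol of Burman et al.~\cite{BCC+21} and the delay technique of \cite{SEIM21}, drives every $\delay$ to $0$ in a further $O(n\log n)$ interactions w.h.p. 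Summing the four stages yields the claimed $O(\tmax\cdot n\log n)$ bound.

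The main obstacle I foresee is the second stage: from an arbitrary initial configuration, multiple leaders (before $\lsle$ has stabilized), large residual $\rst$ values, and a leader $\delay$ near its maximum can interact to stall the ascent of $M(t)$. The analysis must carefully separate the pre-stabilization transient of $\lsle$ from the post-stabilization regime, and show that repeated re-zeroing of clocks triggered by lingering $\rst$ pulses delays $M(t)$ by at most $O(n\log n)$ interactions per phase rather than accumulating over all $\tmax$ phases, so that the final bound does not degrade beyond $O(\tmax\cdot n\log n)$.
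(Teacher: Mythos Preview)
Your four-stage decomposition (leader stabilization via $\lsle$, clock ascent to phase $\tmax$, reset broadcast, counter decay) is exactly the standard reset-mechanism argument that underlies the results the paper invokes in its one-line sketch (Lemma~1 of \cite{ADK+17}, Lemma~5 of \cite{SOK+20}, and Lemma~\ref{lemma:phaseclock}); the paper supplies no further detail, so your write-up is simply a more explicit rendering of the same approach, and the obstacle you flag in stage two is precisely what those cited lemmas are designed to absorb.

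One genuine slip to fix: you assert that the mechanism ``drives every $\delay$ to $0$,'' but the pseudocode does not support this. Line~8 decrements $\delay$ only when the \emph{responder is the leader}; a non-leader responder with $\rst=0$ falls straight through to the \textbf{else} branch at line~9 and never touches $\delay$. Since line~5 refills every agent's $\delay$ to $\constD\cdot\logn$ during the reset wave, non-leader $\delay$ remains positive thereafter, so the population never literally enters $\calI$ as written in Definition~\ref{def:initialized_configurations}. This is an overspecification in the paper's definition rather than a flaw in your strategy: non-leader $\delay$ is a dead variable that affects no transition, and the proof of Theorem~\ref{theorem:main} only uses that $\clock=\rst=\detected=0$ and $\mode=\bot$ hold everywhere together with the leader's $\delay=0$. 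Your four stages establish exactly that weaker statement; just flag the discrepancy rather than claiming that all $\delay$ values vanish.
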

\begin{proof}
Follows from Lemma 1 of \cite{ADK+17}, Lemma 5 of \cite{SOK+20}, and Lemma~\ref{lemma:phaseclock}.
\end{proof}

For the subsequent analysis, we introduce the following sets of configurations.
\begin{definition}
For each $X \in \{\modeF,\modeD,\modeR\}$, let $\cinit(X)$ denote the set of configurations of $\ssrk(\rho)$ in which every agent
\begin{enumerate}
  \item satisfies $\rst = 0 \land (\leader = 1 \to \delay = 0)$,
  \item is in mode $X$ and in phase $T_1 - 1$, $T_2 - 1$, or $T_3 - 1$ for $X = \modeF, \modeD, \modeR$, respectively, and
  \item has all component-specific variables initialized to their specified initial values.
  (See Table~\ref{table:init} for the specified initial values.)
\end{enumerate}
Moreover, for any $r \in [1,n]$, let $\cinit(\modeD,r)$ denote the set of configurations in $\cinit(\modeD)$ in which $a.\target = r$ for every agent $a \in A$.
\end{definition}

\begin{algorithm}[t]
\caption{$\ranking$ at an interaction where initiator $\ini$ and responder $\res$ meet.
} 
\label{al:ranking}
\Notation{
$\rem = \lceil \sqrt{n} \rceil$,\ 
$\phase(a) = \lfloor a.\clock/\constM \rfloor$
}
\Variables{
$\ind \in [0,\logn]$, 
$\nonce \in [0,n^2]$, 
$\cand, \parity \in \{0,1\}$
}
\Initially{
$a.\ind = 0$, $a.\rank =
\begin{cases}
1& \tif a.\leader = 1\\
\bot& \totherwise
\end{cases}
$\\
\hspace{1.68cm}
$a.\nonce = 0$, $a.\parity = (\tmid \bmod 2)$
}
\Remark{
\violet{
Within this subprotocol, $\rank\in[1,n]\cup\{\bot\}$ (see Note~\ref{note:domain}).
}
}
Let $\pini = \phase(\ini),\ \pres = \phase(\res)$ \violet{\tcp*{$\pini \le \pres$}}
\If{$\pres < \tmid$}{
\If{$\ini.\rank \neq \bot  \land \ini.\ind < \logn \land \res.\rank = \bot$}{
$\ini.\ind \gets \ini.\ind + 1$\;
Let $\rnew = \ini.\rank + 2^{\logn - \ini.\ind}$\;
\If{$\rnew \le n-\rem$}{
$\res.\rank \gets \rnew$\;
$\res.\ind \gets \ini.\ind$
}
}
}
\If{
$\pres\ge \tmid$ and $\pres \equiv \res.\parity \pmod{2}$
}{
$\res.\parity \gets 1-\res.\parity$\;
Let $h=\pres-\tmid$\;
\If{$h \equiv 0 \pmod{\rept}$}{
 \If{$\res.\cand=1$}{
 $\res.\rank \gets n-\rem-1+h/\rept$
 } 
 $\res.\cand \gets
 \begin{cases}
 1& \tif \res.\rank = \bot \\
 0& \totherwise
 \end{cases}
 $
 }
 $\res.\nonce \gets 
 \begin{cases}
 \text{an integer chosen u.a.r.~from }[1,n^2] & \tif \res.\cand = 1\\
 0 & \totherwise
 \end{cases}
 $\; 
 \violet{\tcp*{Can be derandomized. See Note \ref{note:derandomize}}}
}
\If{$\pini=\pres \ge \tmid \land \res.\nonce < \ini.\nonce$}{
$\res.\cand \gets 0$\;
$\res.\nonce \gets \ini.\nonce$\;
}
\end{algorithm}

\section{Sublinear-Time, Polynomial-State Ranking}
\label{sec:ranking}
We present a ranking protocol $\ranking$ that assigns unique ranks in $[1,n]$ within $O\left(\rtimeI\right)$ interactions (i.e., $O\left(\rtimeT\right)$ time) w.h.p.\ using only polynomially many states, starting from any configuration in $\cinit(\modeR)$. Throughout this subsection, let 
$\rem = \left\lceil \sqrt{n} \right\rceil$.

\begin{note}
\label{note:domain}
Within this protocol, $\rank$ ranges over $[1,n]\cup\{\bot\}$ rather than $[1,n]$. If $a.\mode$ leaves $\modeR$ while $a.\rank=\bot$, we reset $a.\rank$ to some value in $[1,n]$ (e.g.\ $1$).
\end{note}

The module $\ranking$, shown in Algorithm~\ref{al:ranking}, consists of two parts.  
In the first part (lines 25--32), it assigns $n-\rem$ agents ranks in $[1,n-\rem]$ \emph{in parallel}.  
In the second part (lines 33--44), it assigns the remaining $\rem$ agents ranks in $[n-\rem+1,n]$ \emph{sequentially}.  
The first part requires $O\left(\sfrac{n^2}{\rem} \cdot \log n\right)$ interactions w.h.p., and the second requires $O(\rem n\log n)$ interactions w.h.p.  With $\rem=\sqrt{n}$, the total becomes $O\left(\rtimeI\right)$, as claimed.  

Recall that this subprotocol runs over phases $[T_3,T_4-2]$, where 
$T_4 = T_3 + 2\constT \left\lceil \sqrt{n} \right\rceil$.
To execute its two parts sequentially, we split these phases nearly in half into 
$[T_3,\tmid-1]$ and $[\tmid,T_4-2]$, where
$$
\tmid = T_4 - 1 - \constT \left\lceil \sqrt{n} \right\rceil.
$$
Note that the second interval consists of exactly $\constT \left\lceil \sqrt{n} \right\rceil$ phases.

We design the first part, inspired by the $O(n)$-time, $O(n)$-state SS-RK protocol \protocol{Optimal-Silent-SSR} of~\cite{BCC+21}. Whereas \protocol{Optimal-Silent-SSR} assigns all $n$ ranks, our first part assigns only $n-\rem$ ranks, deferring the remaining $\rem$ ranks to the second part. This reduces the interaction complexity from $O(n^2)$ in expectation (and $O(n^2 \log n)$ w.h.p.) to $O(n^{3/2}\log n)$ interactions w.h.p.

The first part assigns $n-\rem$ ranks using a variable $\ind\in[0,\logn]$. We call an agent with $\rank=\bot$ a \emph{null} agent.
Each non-null agent $a$ interprets the pair $(a.\rank-1,\,a.\ind)$ as a $\logn$-bit integer whose first $a.\ind$ bits are fixed and whose remaining bits are masked:
$$
\underbrace{b_1 b_2 \cdots b_{a.\ind}}_{a.\ind\text{ bits}}***\cdots*,
$$
where `*' denotes a masked bit. Note that subtracting 1 from $a.\rank$ maps ranks in $[1,n-\rem]$ to the zero‐based range $[0,n-\rem-1]$.
When $a.\ind=\logn$, $a.\rank$ is fully determined and no longer changes. 
Upon switching to mode $\modeR$, agents use the leader $a_L$, maintained by module $\lsle$ (line~8): the unique leader $a_L$ sets $\rank=1$, while all others set $\rank=\bot$. All agents also initialize $\ind=0$. Hence initially only $a_L$ is non-null with $(\rank-1,\ind)=(0,0)$, and all others are null. 
The first part proceeds when a non‐null initiator $\ini$ with $\ini.\ind<\logn$ meets a null responder $\res$. Then $\ini.\ind$ increments by one (line~28)---replacing the leftmost masked bit with $0$---and $\rnew-1$ is obtained by setting that bit to $1$ in the binary representation of $\ini.\rank-1$ (line~29). If $\rnew\le n-\rem$, the responder sets $(\rank,\ind)=(\rnew,\ini.\ind)$ (lines~30--32).

In the first part, at most $n-\rem$ ranks are generated, so at least $\rem$ null agents always remain. This key property yields faster stabilization. Fix any $r\in[1,n-\rem]$. At each time step $t$, there is exactly one agent $a_r(t)\in A$ satisfying
$$
\lfloor (r-1)/p\rfloor \cdot p = a_r(t).\rank-1,\quad p=2^{\logn - a_r(t).\ind}.
$$
Each interaction in which $a_r(t)$ (as initiator) meets a null agent increments $a_r(t).\ind$ by one; after $\logn$ such meetings, some agent has $(\rank,\ind)=(r,\logn)$, completing rank assignment for $r$. Since this event occurs with probability at least $\rem/(n(n-1))=\Omega(\rem/n^2)$ per step, by the Chernoff bound the assignment of $r$ completes in $O\left(\sfrac{n^2}{\rem}\cdot \log n\right)$ interactions w.h.p. By a union bound over all $r\in[1,n-\rem]$, the first part finishes in $O\left(\sfrac{n^2}{\rem}\cdot \log n\right)=O(n^{3/2}\log n)$ interactions w.h.p., as desired.

The second part assigns $\rem$ ranks sequentially, using the phase clock that $\ssrk(\rho)$ invokes outside this module (line~17). In each round of $\constT$ phases, all null agents contend to win a new rank. With high probability, exactly one agent wins per round and obtains its assigned rank. This follows a standard leader‐election technique in population protocols~\cite{MST18,SOI+20,GJLL25}. Specifically, when an agent first reaches phase $p\in[\tmid,T_4-2]$—detectable via a binary variable $\parity\in\{0,1\}$—it executes:
\begin{itemize}
  \item If $p-\tmid$ is divisible by $\constT$, a null agent sets its $\cand$ flag to $1$ (line~39), becoming a \emph{candidate} for rank $n-\rem+1+(p-\tmid)/\constT$. A candidate that retains $\cand=1$ for the next $\constT$ phases wins that rank; non‐null agents always reset $\cand=0$.
  \item Every agent with $\cand=1$ chooses its nonce u.a.r.\  from $[1,n^2]$ (derandomizable—see Note~\ref{note:derandomize}); agents with $\cand=0$ set $\nonce=0$.
\end{itemize}
In each phase, the maximum nonce propagates to the entire population w.h.p.\ via the epidemic protocol among agents in the same phase (lines~42--44). Any agent that observes a larger nonce resets its $\cand$ flag to $0$ (line~43).

Let $\ncand$ be the number of candidates in a given phase. They draw nonces $x_1,\dots,x_{\ncand}$ u.a.r.\ from $[1,n^2]$. For each $i\in[2,\ncand]$, 
$\Pr[x_i = \max\{x_1,\dots,x_{i-1}\}]=1/n^2$, 
so by a union bound over $i=2,\dots,\ncand$, there is a unique maximum nonce with probability 
$1 - O(\ncand/n^2) = 1 - O(1/n)$. Hence, by choosing $\constT=\Theta(1)$ sufficiently large, each round selects exactly one winner w.h.p.

By the above discussion, we have the following lemma.
\begin{lemma}
\label{lemma:ranking}
Starting from any configuration $C \in \cinit(\modeR)$, the population under $\ssrk(\rho)$ reaches a configuration where all agents have distinct ranks in $[1,n]$ within $O(\rtimeI)$ interactions w.h.p.
\end{lemma}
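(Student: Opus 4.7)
The plan is to analyze the two parts of $\ranking$---the parallel part covering phases $[T_3,\tmid-1]$ and the sequential part covering phases $[\tmid,T_4-2]$---separately and then combine them via a union bound. Throughout, I rely on Lemma~\ref{lemma:phaseclock} to ensure that agents remain loosely synchronized by phase and that each phase lasts long enough (by choosing $\cmin$ sufficiently large) for the relevant epidemic-style propagations to complete w.h.p., invoking Lemma~\ref{lemma:epidemic} whenever needed.

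For the first part, the key step is an inductive invariant: at every time $t$ in phases $[T_3,\tmid-1]$ and for every $r\in[1,n-\rem]$, there is a unique agent $a_r(t)$ whose $a_r(t).\ind$-bit prefix of the binary representation of $a_r(t).\rank-1$ matches the corresponding prefix of $r-1$. Initially this holds because, by $C\in\cinit(\modeR)$, only the leader $a_L$ is non-null with $(\rank-1,\ind)=(0,0)$, the empty prefix shared by every $r$. The inductive step follows from the update rule of the first part: incrementing $\ini.\ind$ splits the tracked class of ranks in half, the initiator keeping the zero-branch and the previously null responder taking the one-branch, except when $\rnew>n-\rem$ (in which case that half-branch is entirely outside $[1,n-\rem]$ and need not be tracked). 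Rank $r$ is finalized once $a_r(t).\ind=\logn$. The advancing event for $r$ requires $a_r(t)$ to meet a null agent as initiator; since at most $n-\rem$ ranks are ever assigned, at least $\rem$ null agents persist, so the event has probability $\Omega(\rem/n^2)$ per interaction. A Chernoff bound then guarantees that $\logn$ such events occur within $O(\sfrac{n^2}{\rem}\log n)$ interactions w.h.p. A union bound over the $n-\rem$ ranks, with $\rem=\lceil\sqrt{n}\rceil$, yields completion of the first part in $O(n^{3/2}\log n)$ interactions w.h.p.

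For the second part, I apply the standard nonce-based winner-selection argument across $\rem$ rounds of $\constT$ phases each. By choosing $\constT$ and $\cmin$ large enough, Lemma~\ref{lemma:phaseclock} ensures each phase of each round lasts long enough for the maximum-nonce epidemic (the last two lines of Algorithm~\ref{al:ranking}) to reach every agent currently in that phase w.h.p., by Lemma~\ref{lemma:epidemic}. In round $j\in[0,\rem-1]$, every null agent sets $\cand=1$ and draws a fresh nonce uniformly from $[1,n^2]$; the maximum among the $\ncand\le n$ candidate nonces is unique with probability at least $1-O(\ncand/n^2)=1-O(1/n)$, and the unique surviving candidate assigns itself rank $n-\rem+j$ at the next round boundary. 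A union bound over the $\rem=O(\sqrt{n})$ rounds shows all of them succeed w.h.p., assigning exactly the ranks in $[n-\rem+1,n]$ within $O(\sqrt{n}\cdot n\log n)=O(n^{3/2}\log n)$ interactions. Combining the two parts yields $O(n^{3/2}\log n)$ interactions in total, i.e., $O(\sqrt{n}\log n)$ parallel time, w.h.p.

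The main obstacle will be the inductive invariant for the first part, in particular verifying that skipping the update when $\rnew>n-\rem$ does not break the bijection between targeted ranks and tracker agents. This reduces to the observation that, whenever the $(a_r.\ind+1)$-th bit of $r-1$ is $1$, one has $r\ge\rnew$, so $\rnew>n-\rem$ forces $r>n-\rem$ and hence $r$ lies outside the range we wish to assign in this part. All remaining steps are routine Chernoff/union-bound arguments whose failure probabilities can be made $O(n^{-\eta})$ for any constant $\eta$ by tuning $\constT$, $\cmin$, and $\constM$, in accordance with Definition~\ref{def:whp}.
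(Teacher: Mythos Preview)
Your proposal is correct and follows essentially the same two-part decomposition and argument as the paper: the prefix-tracker invariant with a Chernoff bound for the parallel part, and the nonce-based unique-winner argument with a union bound over $\rem$ rounds for the sequential part. Your explicit treatment of the $\rnew>n-\rem$ case (showing that any rank $r$ routed to the one-branch satisfies $r\ge\rnew$, so no rank in $[1,n-\rem]$ is lost when the update is skipped) is a detail the paper leaves implicit, but otherwise the structure and bounds coincide.
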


\begin{note}[Derandomization]
\label{note:derandomize}
In $\ranking$, each non-null agent generates a random number from $[1,n^2]$, making the protocol randomized. However, we can derandomize it by exploiting the scheduler’s randomness: whenever agent $a$ is chosen for an interaction, it becomes initiator with probability $1/2$ (and responder otherwise), yielding one random bit per interaction. By Lemma~\ref{lemma:phaseclock}, each phase $i$ involves $\Theta(n\log n)$ interactions w.h.p., of which $a$ participates in $\Theta(\log n)$ interactions w.h.p.\ (for sufficiently large constant $\cmin$). Thus $a$ obtains $\Theta(\log n)$ random bits per phase—enough to generate a number in $[1,n^2]$ for the next phase. This assumes $n^2$ is a power of 2, but replacing $n^2$ with $2^{\lceil2\lg n\rceil}$ only increases the success probability.
Note that this derandomization requires a multiplicative factor of $n^2$ in the number of states, although this does not affect our results: it suffices for $\ranking$ to use only polynomially many states. Moreover, if one wishes to eliminate these $n^2$ states, a special phase can be periodically inserted that is dedicated to generating random bits, where agents reuse the variable $\nonce$ to produce $2\log n$ random bits.
Note also that this derandomization does not make nonces independent across agents. In $\ranking$, such dependence only accelerates the elimination of losing candidates. If full independence is required, one can use the derandomization method of \cite{BCC+21} (see Section~6 of the arXiv version~\cite{BCC+21arxiv}).
\end{note}

\section{Finding a Target}
\label{sec:findtarget}
In this section, we design the subprotocol $\findtarget$ by slightly modifying the collision-detection protocol $\coldbfull$ of~\cite{AS26}, abbreviated $\coldb$.
Protocol $\coldb$ detects rank collisions in $O(n^{3/2} \cdot \sqrt{\log n})$ expected interactions using $O(n\cdot\mathrm{poly}(\log n))$ states, and never reports a false positive when starting from a correctly initialized configuration. We modify $\coldb$ so that, if a \emph{duplicate} rank (i.e., a rank shared by two or more agents) exists, all agents agree on one such rank and store it in $\target$ within $O(n^{3/2} \log n)$ interactions with probability $1-o(1)$. If no duplicate exists, $\findtarget$ may set $\target$ arbitrarily (we default to 1).

Protocol $\coldb$ relies on two assumptions, both of which are satisfied in our setting. First, it requires approximate knowledge of the population size $n$ (asymptotically tight lower and upper bounds), whereas we provide exact knowledge of $n$. Second, it assumes a unique leader to operate a single-leader phase clock~\cite{AAE08}; we maintain such a leader $a_L$ via the LS-LE protocol $\lsle$ (line~8) and run the phase clock (line~17) (see Section~\ref{sec:tools}), allowing the computed phase to be reused by $\coldb$.

Hence, we obtain $\findtarget$ as follows.  Upon switching to mode $\modeF$, each agent sets $\target=1$ (and resets auxiliary variable $\parity$ to $T_1 \bmod{2}$.) Thereafter, on each interaction they execute $\coldb$.  If a collision is detected, they compute a duplicate rank $r$ and overwrite $\target$ with $r$.  In phase $T_2-2$, i.e., the final phase for this sub-protocol, agents stop running $\coldb$ and instead execute an epidemic on $\target$, so that the maximum duplicate rank found so far propagates to the entire population.
For completeness, we include the pseudocode of $\findtarget$ in the appendix (Algorithm~\ref{al:findtarget}).

We now lower‐bound the success probability of $\findtarget$ by $1-o(1)$. \cite{AS26} proved that for any integer $k\in[1,\Theta(\log n)]$, $\coldb$ detects a collision (if exists) within $O(k\cdot n^{3/2}\cdot\sqrt{\log n})$ interactions with probability $1-(2/3)^k$. (See the proof of Lemma 5 in \cite{AS26}.) Since agents remain in phases $[T_1,T_2-3]$ for $\Theta(n^{3/2}\cdot\log n)$ interactions, choosing $k=\sqrt{\log n}$ gives success probability $1-o(1)$.

By the above discussion, we have the following lemma.

\begin{lemma}
\label{lemma:target}
For any configuration $C$, let $R(C)$ denote the set of duplicate ranks appearing in $C$.  
Starting from any $C \in \cinit(\modeF)$ with $|R(C)| \ge 1$, the population under $\ssrk(\rho)$ reaches, within $O(n^{3/2}\log n)$ interactions and with probability $1 - o(1)$, a configuration in which every agent stores the same value $r \in R(C)$ in the variable $\target$.
\end{lemma}

\section{Proof of Main Theorem}
\label{sec:main_theorem}

\begin{proof}[Proof of Theorem~\ref{theorem:main}]
We first bound the number of states.  In $\ssrk(\rho)$, the largest domain arises from the variable $\vlist$ in $\ccd(r,\rho)$, which ranges over all subsets of $\rname=[1,\mname]\setminus\{r\}$ of size at most $\maxnum$.  Hence the size of the domain is at most 
$$
\sum_{i=0}^{\maxnum} \binom{\mname}{i}\le \sum_{i=0}^{\maxnum} (\mname)^i=O(\rho^{2\maxnum})=O(2^{2\rho\lg^2\rho}).
$$
All other variables range over polynomially sized domain.
Thus, the total number of states is 
$O(2^{2\rho \lg^2 \rho}) \cdot \poly(n) = 2^{2\rho\lg^2 \rho + O(\log n)}$.

Next, we bound the stabilization time.  
By Lemma~\ref{lemma:lsle}, for any $T = \poly(n)$, starting from any configuration the population reaches, within $O(n\log n)$ expected interactions, a configuration $C$ from which a unique leader is preserved for at least $T$ interactions w.h.p.
Therefore, by Lemma~\ref{lemma:closure}, it suffices to show that, starting from any such configuration $C$, the population reaches a safe configuration within $O(\goalI)$ interactions with probability $\Omega(1)$ (see Definition~\ref{def:safe} for the definition of safe configurations); the expected-time bound then follows.  
(Note that Lemma~\ref{lemma:closure} does not require the unique-leader assumption.)

By Lemma~\ref{lemma:reset}, within $O(\goalI)$ interactions w.h.p.\ the system reaches some $C_0\in\calI$.  If $C_0$ has no duplicate rank, then $\detected=0$ and $\mode=\bot$ in $C_0$, so $C_0$ is safe and we are done.  Otherwise, suppose $C_0$ contains at least one duplicate rank.

By Lemma~\ref{lemma:phaseclock}, within $O(n\log n)$ interactions w.h.p.\ the execution enters $\cinit(\modeF)$ from $C_0 \in \calI$.  Then by Lemma~\ref{lemma:target}, within $O(n^{3/2}\log n)$ interactions with probability $1-o(1)$, $\findtarget$ causes all agents to agree on a duplicate rank $r$, reaching $\cinit(\modeD,r)$.  From there, Lemma~\ref{lemma:detect} implies that within $O(\goalI)$ interactions with probability $\Omega(1)$ every agent raises $\detected=1$ and the execution enters $\cinit(\modeR)$.  Finally, by Lemma~\ref{lemma:ranking}, within $O(n^{3/2}\log n)$ interactions w.h.p.\ the population assigns all $n$ agents distinct ranks, and by Lemma~\ref{lemma:phaseclock} another $O(n^{3/2}\log n)$ interactions suffice to reach phase~$\tmax$. Then, all agents reset $\mode$ and $\detected$ by epidemic for $\rst$, thus reaching a safe configuration.  

By a union bound over these phases, the probability of reaching a safe configuration within $O(\goalI)$ interactions is still $\Omega(1)$.  This completes the proof.
\end{proof}

\section{Conclusion}
\label{sec:conclusion}

In this paper, we have presented a new time--space tradeoff for self-stabilizing leader election (SS-LE) in the population protocol model under exact knowledge of the population size $n$. Independently of the recent protocol of~\cite{ABF+25}, we showed:
\begin{itemize}
  \item For any integer $2\le \rho\le \sqrt{n}$, there exists an SS-LE protocol that stabilizes in $O(\sfrac{n}{\rho}\cdot\log\rho)$ expected time using $2^{2\rho\lg^2\rho+O(\log n)}$ states.
  \item By choosing $\rho=\Theta\left(\sfrac{\log n}{\log^2\log n}\right)$, this yields the first \emph{polynomial}-state protocol with sublinear expected time.  
\end{itemize}
Compared to the protocol of~\cite{ABF+25}, ours uses drastically fewer states whenever the target expected time is at least $\Theta(\sqrt{n}\log n)$.

Our result complements prior work by filling the gap between the linear-state, linear-time regime and the super-exponential-state, optimal-time regime.  A remaining limitation is that the restriction $\rho\le\sqrt{n}$ prevents us from achieving $o(\sqrt{n}\log n)$ expected time; by contrast, the protocol of~\cite{ABF+25} can reach the optimal $\Theta(\log n)$ time at the cost of super-exponential state size.

The following questions remain open:
\begin{itemize}
  \item Does there exist an SS-LE protocol using only polynomially many states that stabilizes in $O(n^{1-\epsilon})$ time for some constant $\epsilon>0$?  Our time--space tradeoff achieves only $O\left(n/\poly(\log n)\right)$ time under polynomial‐state constraints.
  \item Does Conjecture~\ref{conjecture:ss-rk} hold?  Equivalently, must one solve SS-RK in order to achieve SS-LE?
\end{itemize}

\paragraph*{Acknowledgments}
This work is supported by JST FOREST Program JPMJFR226U and 
JSPS KAKENHI Grant Numbers JP20KK0232, JP25K03078, and JP25K03079.

\bibliographystyle{alpha}
\bibliography{population}

\clearpage

\appendix

\begin{algorithm}[t]
\caption{$\findtarget$ at an interaction where
initiator $\ini$ and responder $\res$ meet.
} 
\label{al:findtarget}
\Notation{
$r = \lceil \lg n \rceil$, $\ell = \left \lceil \sqrt{n\log n} \right \rceil$, and $z = \lceil n / \ell \rceil$, $\phase(a) = \lfloor a.\clock/\constM \rfloor$
}
\Variables{
$\target \in [1,n]$,
$\grid \in ([1,\ell]\times\{0,1\})\cup \{\bot\}$,
$\infectivity \in [0,\lfloor \lg n - \lg \ell \rfloor]$,
$\parity \in \{0,1\}$
}
\Initially{
$a.\target = 1$, 
$a.\parity = (T_1 \bmod{2})$
}
Let $\pini = \phase(\ini),\ \pres = \phase(\res)$ \violet{\tcp*{$\pini \le \pres$}}
\uIf{$\pres < T_2 -2$}{
\If{$\ini.\rank = \res.\rank$}{
$\res.\target \gets \res.\rank$\;
}
Let $k=(\pres - T_1)\bmod z$\;
\If{$\pres \equiv \res.\parity \bmod{2}$}{
$\res.\parity \gets 1 - \res.\parity$\;
\uIf{$\res.\rank \in [k\cdot \ell+1,(k+1)\cdot \ell]$}{
Choose $\chi$ uniformly at random from $\{0,1\}$\violet{\tcp*{Can be derandomized.}}
$\res.\grid \gets (\res.\rank - k\cdot \ell,\chi)$\;
$\res.\infectivity \gets \lfloor \lg n - \lg \ell \rfloor$\;
}\Else{
$\res.\grid \gets \bot$\;
$\res.\infectivity \gets 0$\;
}
}
\If{
$\pini = \pres$
}{
\If{$\ini.\grid \neq \bot \land \ini.\infectivity > 0 \land \res.\grid = \bot$}{
$\res.\grid \gets \ini.\grid$\;
$\ini.\infectivity \gets \res.\infectivity \gets \ini.\infectivity-1$
}
\If{$\exists \alpha,\beta,\gamma: \ini.\grid=(\alpha,\beta) \land \res.\grid=(\alpha,\gamma) \land \beta \neq \gamma$}{
$\res.\target \gets \min(n,k\cdot \ell+\alpha)$\;
}
}
}\Else{
$\res.\target \gets \max(\res.\target,\ini.\target)$
}
\end{algorithm}

\section{Omitted Proof and Pseudocode}
\label{sec:proofs}
\begin{proof}[Proof of Remark \ref{remark:lower_bound_abf}]
In $\assignranks_{\rho}$, the $n$ agents are eventually split into $\rho$ \emph{deputies} and $n-\rho>n/2$ \emph{recipients}.  Before assigning unique ranks to all $n$ agents, each recipient must meet at least one deputy.  Therefore, the expected number of interactions to assign distinct ranks to all agents is at least
$$
\sum_{i=1}^{n-\rho}\frac{\binom{n}{2}}{i\,\rho}
=\Omega\left(\tfrac{n^2\log n}{\rho}\right),
$$
i.e., $\Omega(\sfrac{n}{\rho}\cdot\log n)$ expected time.
\end{proof}

\begin{proof}[Proof of Lemma \ref{lemma:lsle}]
\cite{SEIM21} prove that for any constant $\tau \ge 1$,
there exists a set $\csafe$ of configurations of $\lsle(\tau)$ such that
(i) starting from an arbitrary configuration, the population under $\lsle(\tau)$ reaches a configuration in $\csafe$ within $O(\tau n\log n)$ expected interactions, and (ii) starting from any configuration in $\csafe$,
the population goes back to a safe configuration after $\Theta(\tau n \log n)$ interactions and retains a unique leader during the period with probability $1-n^{-\tau}$ (See the proof of Lemma 15 in Appendix A.1 of \cite{SEIM21} for details).
The lemma immediately follows from this fact and the union bound, adjusting $\tau = \Theta(1)$ sufficiently.
\end{proof}

\end{document}